\newtheorem{theorem}{Theorem}
\newtheorem{example}[theorem]{Example}
\newtheorem{proposition}[theorem]{Proposition}
\newtheorem{remark}[theorem]{Remark}
\newenvironment{proof}[1][Proof]{\textbf{#1.} }{\ \rule{0.5em}{0.5em}}
\begin{document}
\title{\Large \textbf{
Non-homogeneous Hamiltonian structures for quasilinear systems}}
\author{\large Pierandrea Vergallo$^{1,2}$ 
\\[3mm]
$^1$\small Department of Mathematical, Computer, Physical and Earth Sciences, \\
\small University of Messina, \\
\small V.le F. Stagno D'Alcontres 31, Messina, 98166, Italy\\
  $^2$\small Istituto Nazionale di  Fisica Nucleare -- Sez.\ Lecce
  \\
\small   \texttt{pierandrea.vergallo@unime.it}
  %\\
 % \texttt{raffaele.vitolo@unisalento.it}
}
\date{\small }

\maketitle
\abstract{This paper aims at investigating necessary (and sufficient) conditions for quasilinear systems of first order PDEs to be Hamiltonian, with non-homogeneous operators of order $1+0$, also with degenerate leading coefficient.  As a byproduct,  Tsarev's compatibility conditions are extended to degenerate operators.  Some examples are finally discussed.}

\maketitle
\section{Introduction}
Evolutionary  systems are very common in mathematical models for applied sciences, and play a central role in the theory of integrable systems. Indeed, they represent the natural extension to Partial Differential Equations (in the following, PDEs) of dynamical systems.  In general,  studying Nonlinear phenomena is a difficult task, especially in order to search for solutions of the systems.  Commonly, a powerful tool to investigate such models is represented by Hamiltonian formalism, as in the case of Ordinary Differential Equations (ODEs) in dynamical systems. In particular, finding a Hamiltonian structure for a chosen system gives arise to geometric considerations on the conserved quantities and the inner symmetries at first, but also on the solutions themselves at second.  Systems admitting the Hamiltonian property are largely present in physical and mechanical models, and are deeply studied in pure and applied mathematics.  Moreover, this formalism is also strictly connected to differential and algebraic geometry \cite{FerPavVit1,FPV14,VerVit2}, cohomological theories, algebraic topology and pure algebra. Several examples can be found in fluid dynamics, quantuum and soliton theories \cite{NovikovManakovPitaevskiiZakharov:TS} and recently also in kinetic models \cite{VerFer1}. Among these, the Korteweg de Vries equation, the Sine-Gordon equation, Euler's equations, the Camassa-Holm equations \cite{falq} and the KP equation are deeply investigated in the last forty years.  Interested reader can see \cite{DubKri,Olver:ApLGDEq, mokhov98:_sympl_poiss}. 

From a technical viewpoint,  let us consider evolutionary systems of order $k$:
\begin{equation}\label{1}
    u^i_t=F^i(x,u^j, u_x^j, \dots , u_{kx}^j), \qquad i=1,\dots, n.
\end{equation} 
in $1+1$ dimensions, i.e. the number of independent variables is two ($x,t$) and one of them has a favored role (usually  $t$ represents the time in models describing physical phenomena).
Here, the field variables are indicated with $u^i$,  for $i=1,2\dots , n$.  Now,  we briefly recall that an evolutionary system \eqref{1} admits Hamiltonian formulation if there exist a Hamiltonian differential operator $A^{ij}$ and a functional $H=\int{h(x,u^j,\dots u^j_{kx})\, dx}$ such that 
\begin{equation}
    u^i_t=F^i(x,u^j, u_x^j, \dots , u_{kx}^j)=A^{ij}\frac{\delta H}{\delta u^j}, \qquad i=1,2,\dots n.
\end{equation}
Hereafter, we consider local operators $A^{ij}=A^{ij\sigma}D_\sigma$, where $D_\sigma=D_x\circ \cdots \circ D_x$ is the composition of $\sigma$-times the total derivative with respect to $x$,  and indicate with $\delta/\delta u^j$ the variational derivative with respect to the field variable $u^j$.
The Hamiltonian property of the operator is given as follows. Consider a local differential operator and define a bracket between conserved quantities $P,Q$ as follows
\begin{equation}
    \{P,Q\}_A=\int{\frac{\delta p}{\delta u^i}A^{ij\sigma}D_\sigma\frac{\delta q}{\delta u^j}\, dx}\, .
\end{equation}
Then $A$ is Hamiltonian if the associated bracket $\{\, ,\, \}_A$ is a Poisson bracket, i.e. it is a derivation, skew-symmetric and satisfies the Jacobi identity. These conditions on the brackets are equivalent to require the operator $A$ to be skew-adjoint and that the Schouten brackets of $A$ with itself vanish ($[A,A]=0$).

\vspace{3mm}

The question naturally arising is: when does an evolutionary system admit Hamiltonian formulation? First results in this direction were presented in \cite{VerVit1} and later in \cite{Ver1}.  In this paper,  recent developments concerning quasilinear systems of first order PDEs are discussed.

\section{Quasilinear systems }
In the present paper, we focus on particular quasilinear systems of first order
\begin{equation}\label{a1}
u^i_t=V^i_j(u)u^j_x+W^i(u), \qquad i=1,2,\dots n.
\end{equation}
Systems \eqref{a1} naturally arise in models of mathematical biology or mathematical physics.  As an example,  consider the well-known $3$-wave equation:
\begin{equation}\begin{cases}\label{3wav}
u^1_t=-c_1u^1_x-2(c_2-c_3)u^2u^3\\
u^2_t=-c_2u^2_x-2(c_1-c_3)u^1u^3\\
u^3_t=-c_3u^3_x-2(c_2-c_1)u^1u^2\end{cases}
\end{equation}
Here $u^1,u^2,u^3$ are the field variables whereas $c_1,c_2,c_3$ are constants.

Moreover,  quasilinear systems arise when considering an evolutionary scalar equation of arbitrary order $k$. Indeed,  applying a procedure introduced by S.I.  Tsarev in \cite{tsa3},  it is possible to increase the number of field variables with the following change of coordinates:
\begin{equation}
u^1=u,  u^2=u_x,  \dots u^k=u_{(k-1)x}.
\end{equation}
Finally, inverting the independent variables $x$ with $t$ a quasilinear system \eqref{a1} of order 1 is obtained.  Interested readers can see \cite{tsa3} and the more recent paper \cite{DellVer1}. As an example,  using this inversion procedure the Kortweg De Vries equation $u_t=6uu_x-u_{xxx}$ is transformed into the following system
\begin{equation}
\begin{cases}u^1_t=u^2\\
u^2_t=u_3\\
u^3_t=-u^1_x+6u^2u^3
\end{cases}
\end{equation}

We deduce that the study of such evolutionary systems is of interest also in the theory of integrable systems.  
A first property of quasilinear systems of first order is given by the following Proposition.

\begin{proposition}Non-homogeneous quasilinear systems of first order \eqref{a1} are invariant in form under diffeomorfisms $\bar{u}^i=\bar{u}^i(u^j)$ of the manifold of dependent variables. \end{proposition}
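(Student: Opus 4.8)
The plan is to verify directly that the structural form of system \eqref{a1} — a linear combination of $u^j_x$ with coefficients depending only on $u$, plus a term depending only on $u$ — is preserved by an arbitrary change of dependent variables $\bar{u}^i=\bar{u}^i(u^j)$. First I would compute the transformation rule for the $t$-derivative: by the chain rule, $\bar{u}^i_t=\dfrac{\partial\bar{u}^i}{\partial u^k}\,u^k_t$, and similarly $\bar{u}^i_x=\dfrac{\partial\bar{u}^i}{\partial u^k}\,u^k_x$, which can be inverted (since the diffeomorphism has invertible Jacobian) to express $u^k_x$ in terms of $\bar{u}^\ell_x$ via the inverse Jacobian $\dfrac{\partial u^k}{\partial\bar{u}^\ell}$.

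Next I would substitute \eqref{a1} into the expression for $\bar{u}^i_t$, obtaining
\begin{equation}
\bar{u}^i_t=\frac{\partial\bar{u}^i}{\partial u^k}\bigl(V^k_j(u)u^j_x+W^k(u)\bigr)
=\frac{\partial\bar{u}^i}{\partial u^k}V^k_j(u)u^j_x+\frac{\partial\bar{u}^i}{\partial u^k}W^k(u),
\end{equation}
and then replace $u^j_x$ by $\dfrac{\partial u^j}{\partial\bar{u}^\ell}\bar{u}^\ell_x$. This yields $\bar{u}^i_t=\bar{V}^i_\ell(\bar{u})\,\bar{u}^\ell_x+\bar{W}^i(\bar{u})$ with the new coefficients
\begin{equation}
\bar{V}^i_\ell=\frac{\partial\bar{u}^i}{\partial u^k}\,V^k_j\,\frac{\partial u^j}{\partial\bar{u}^\ell},
\qquad
\bar{W}^i=\frac{\partial\bar{u}^i}{\partial u^k}\,W^k.
\end{equation}
The key observations are that $\bar{V}^i_\ell$ transforms as a $(1,1)$-tensor (an endomorphism of the tangent bundle of the manifold of dependent variables) and $\bar{W}^i$ as a vector field, so both are genuine functions of $\bar{u}$ alone; in particular no derivatives $\bar{u}^\ell_{xx}$ or higher appear, and no explicit $x$-dependence is introduced.

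The only mild subtlety — what passes for the "main obstacle" in an otherwise routine computation — is to confirm that the second-order terms that might a priori arise when differentiating do not in fact appear. Since $\bar{u}^i$ depends on $x$ only through $u$, the chain rule produces exactly one factor of a first $x$-derivative of $u$, and expressing it back through $\bar{u}^\ell_x$ keeps everything first order; there is no term proportional to $\partial^2\bar{u}^i/\partial u^k\partial u^j$ times $u^j_x u^k_x$ because such cross terms would require differentiating the coefficient, which we do not do here — we merely substitute. Hence the form \eqref{a1} is reproduced, with $V$ and $W$ transforming tensorially, which completes the proof.
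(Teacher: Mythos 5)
Your computation is correct and is exactly the ``standard rules for change of variables and derivatives'' that the paper invokes: the paper's proof is a one-line appeal to this routine chain-rule argument, and you have simply written it out, identifying $\bar{V}^i_\ell=\pd{\bar{u}^i}{u^k}V^k_j\pd{u^j}{\bar{u}^\ell}$ and $\bar{W}^i=\pd{\bar{u}^i}{u^k}W^k$ as the transformed coefficients. Same approach, just made explicit; no issues.
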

\begin{proof}
Applying the standard rules for change of variables and derivatives,  the proof is straighforward.
\end{proof}
\subsection{Hamiltonian structures}

Let us briefly recall that homogeneous operators are operators of fixed degree $m$ in the order of derivation,  i.e.  operators for which every term is homogeneous of degree $m$ with respect to the $x$-derivative:
\begin{align}\label{hhom}
   \begin{split}
A^{ij}=&g^{ij}{\partial^m_x}+b^{ij}_k\,u^k_x\,{\partial^{m-1}_x}+\left(c^{ij}_k\,u^k_{xx}+c^{ij}_{kl}\,u^k_x\,u^l_x\right){\partial^{m-2}_x}+\dots \\
&\hphantom{ciaociao}\,\dots \,+\left(d^{ij}_k\,u^k_{nx}+\dots +d^{ij}_{k_1\dots k_m}\,u^{k_1}_x\,\cdots\, u^{k_m}_x\right) \,,
\end{split}
\end{align}  
where the coefficients $g^{ij},  b^{ij}_k,\, c^{ij}_k,\, \dots $ depend on the field variables.  Operators of this type were firstly introduced by Dubrovin and Novikov in \cite{DubrovinNovikov:PBHT} and later investigated also in \cite{FerPavVit1, VerVit1,VerVit2, FPV14,tsarev91:_hamil, mokhov98:_sympl_poiss} by different authors.  In particular,  in \cite{DN83} the authors introduced first order homogeneous Hamiltonian operators
\begin{equation}\label{dnop}
    g^{ij}\partial_x+b^{ij}_ku^k_x,
\end{equation}
where $g^{ij}$ and $b^{ij}_k$  depend on the field variables ${u^j}$.  Such operators are also known as \emph{Dubrovin-Novikov operators} and naturally arise when dealing with quasilinear systems of first order.  The Hamiltonianity conditions for \eqref{dnop} are given in the following Theorem (see the review \cite{mokhov98:_sympl_poiss}):
{\begin{theorem} \label{th:ham_A}
The operator \eqref{dnop} is Hamiltonian if and only if
\begin{subequations}
\begin{align} 
&g^{ij}=g^{ji} \\
&\frac{\partial g^{ij}}{\partial u^k} = b^{ij}_k+b^{ji}_k \\
&g^{is}b^{jk}_s-g^{js}b^{ik}_s=0 \\
&g^{is}\left( b^{jr}_{s,k}-b^{jr}_{k,s}\right)+b^{ij}_sb^{sr}_k-b^{ir}_s b^{sj}_k=0  \label{curv}\\
\begin{split}
&\left(g^{is}\left(b^{jr}_{s,k}- b^{jr}_{k,s}\right)+b^{ij}_sb^{sr}_k-b^{ir}_sb^{sj}_k\right)_{,q}+\displaystyle \sum_{(i j r)} \left(b^{si}_q\left(b^{jr}_{k,s}- b^{jr}_{s,k}\right)\right)\\
&\hphantom{ciao}+\left(g^{is}\left(b^{jr}_{s,q}- b^{jr}_{q,k}\right)+b^{ij}_sb^{sr}_q-b^{ir}_sb^{sj}_q\right)_{,k} + \displaystyle \sum_{(i j r)} \left(b^{si}_k\left( b^{jr}_{q,s}- b^{jr}_{s,q}\right)\right)=0 \,,
\end{split}
\end{align} \end{subequations}
with the sum over $(i,j,r)$ is on cyclic permutations of the indices. 
\end{theorem}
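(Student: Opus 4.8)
The plan is to use the criterion recalled in the Introduction: the operator \eqref{dnop} is Hamiltonian precisely when it is skew-adjoint and its Schouten bracket with itself vanishes, $[A,A]=0$. I would verify these two properties separately, reading off the stated algebraic-differential relations by comparing coefficients in each case.

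\emph{Skew-adjointness.} Writing $A^{ij}=g^{ij}D_x+b^{ij}_k u^k_x$ and integrating by parts once, the formal adjoint is $A^{*ij}=-g^{ji}D_x-(\partial_k g^{ji})\,u^k_x+b^{ji}_k u^k_x$. Imposing $A^{ij}=-A^{*ij}$ and equating separately the coefficient of $D_x$ and the zeroth-order coefficient (the part proportional to $u^k_x$) yields at once $g^{ij}=g^{ji}$ and $\partial_k g^{ij}=b^{ij}_k+b^{ji}_k$, that is, the first two conditions; no further input is needed here.

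\emph{Jacobi identity.} For $[A,A]=0$ I would pass to the functional multivector calculus in the style of Olver \cite{Olver:ApLGDEq}: introduce odd variables $\theta_i$, form the bivector $\Theta=\tfrac12\int \theta_i\wedge A^{ij}\theta_j\,dx$ with $A^{ij}\theta_j=g^{ij}\theta_{j,x}+b^{ij}_k u^k_x\theta_j$, and compute the trivector $\mathrm{pr}\,\mathbf{v}_{A\theta}(\Theta)$, whose vanishing is equivalent to the Jacobi identity. The prolonged vector field acts on $g^{ij}(u)$ and $b^{ij}_k(u)$ through their dependence on $u$ and on $u^k_x$ through $D_x$, so the expansion is a sum of terms $(\text{polynomial in }u_x,u_{xx})\cdot\theta^{(a)}_i\theta^{(b)}_j\theta^{(c)}_k$. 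I would then integrate by parts to lower the derivative orders of the $\theta$'s to a canonical form --- in which the $\theta_{xx}$-terms get absorbed and only the monomials $\theta_{i,x}\theta_{j,x}\theta_k$, $\theta_{i,x}\theta_j\theta_k$ and $\theta_i\theta_j\theta_k$ survive --- and equate to zero the coefficient of each. Organized by polynomial degree in the $u$-derivatives, the degree-zero coefficients (of $\theta_{i,x}\theta_{j,x}\theta_k$) give the third condition, the degree-one coefficients (of $u^m_x\,\theta_{i,x}\theta_j\theta_k$) give the ``curvature'' condition \eqref{curv}, and the degree-two coefficients (of $u^m_x u^n_x\,\theta_i\theta_j\theta_k$ and of $u^m_{xx}\,\theta_i\theta_j\theta_k$) give the last condition, the cyclic sum over $(ijr)$ being exactly the total antisymmetrization forced by the wedge of the three odd variables. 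The converse is automatic, since all the coefficient comparisons are equivalences.

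\emph{Main obstacle.} The delicate part is the trivector bookkeeping in the second step: one must integrate by parts consistently to reach the canonical basis before comparing coefficients, track carefully the antisymmetrization over the three $\theta$'s, and verify the reduction of the top-order ($\theta_{xx}$) terms. Condition \eqref{curv} and, above all, the last one collect the largest number of contributions, so this is where sign and index errors are most likely. A useful consistency check is the non-degenerate case: when $\det(g^{ij})\neq 0$, the first three conditions say that $g^{ij}$ is a (pseudo-)Riemannian contravariant metric with $b^{ij}_k=-g^{is}\Gamma^j_{sk}$ its Levi-Civita connection, \eqref{curv} reduces to the flatness of $g^{ij}$, and the last condition follows as a differential consequence of it, recovering the classical Dubrovin--Novikov picture.
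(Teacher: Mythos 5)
The paper does not prove this theorem: it is quoted as a known result from the cited review of Mokhov (and goes back to Dubrovin--Novikov in the non-degenerate case), so there is no in-paper argument to compare against. Your plan is the standard derivation used in that literature --- skew-adjointness fixes $g^{ij}=g^{ji}$ and $g^{ij}_{,k}=b^{ij}_k+b^{ji}_k$, and the vanishing of the Schouten bracket, computed via Olver's functional trivector calculus after reduction to the canonical monomials $\theta_{i,x}\theta_{j,x}\theta_k$, $u^m_x\theta_{i,x}\theta_j\theta_k$, $u^m_xu^n_x\theta_i\theta_j\theta_k$ and $u^m_{xx}\theta_i\theta_j\theta_k$, yields the remaining three conditions --- and your assignment of each condition to the corresponding coefficient, as well as the non-degenerate consistency check, is correct. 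The only caveat is that what you give is a correct blueprint rather than a completed proof: the integration-by-parts bookkeeping and the verification that the $u_{xx}$-coefficient reproduces (rather than adds to) the $(k,q)$-symmetric last condition are asserted, not carried out.
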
}

In \cite{DN83}, Dubrovin and Novikov studied the non-degenerate case, i.e.  when the leading coefficient satisfies the condition $\det g^{ij}\neq 0$. Under this assumption,  the operator \eqref{dnop} is Hamiltonian if and only if $g_{ij}=(g^{ij})^{-1}$ is a flat metric and $b^{ij}_k=-g^{is}\Gamma_{sk}^j$, where $\Gamma^i_{jk}$ are Christoffel symbols for $g$.

A natural extension of homogeneous operators is given by non-homogeneous ones.  They are combinations of homogeneous operators of different orders.  In this paper, we will use the notation introduced by Dubrovin and Novikov \cite{DubrovinNovikov:PBHT}: if $C$ is a non-homogeneous operator obtained as  combination of two operators of degree $k$ and $h$ we say that $C$ is a non-homogeneous operator of type $k+h$ (with $k> h$). 

The simplest example is given by operators of type $1+0$,  also known as \emph{non-homogeneous hydrodynamic} operators and introduced in  \cite{DubrovinNovikov:PBHT}:
\begin{equation}
  C^{ij}=  g^{ij}\partial_x+b^{ij}_ku^k_x+\omega^{ij}
  \label{nhop}
\end{equation}
where $g^{ij},b^{ij}_k$ and $\omega^{ij}$ depend on the field variables ${u}^j$.

For the Hamiltonianity conditions of $C^{ij}$, we briefly recall the Hamiltonianity conditions of the $0$-order operator $\omega$:
\begin{theorem} \label{th:ham_omega}
The operator $\omega^{ij}(u)$ is Hamiltonian if and only if it forms  a finite-dimensional Poisson structure, i.e. it satisfies the conditions
\begin{equation}\label{cond3}
        \omega^{i j} = - \omega^{j i}  \qquad \textup{(skew-symmetry)}
\end{equation}
{\begin{equation} \label{cond4}
        \omega^{i s}  \omega^{j k}_{,s} + \omega^{j s} \omega^{k i}_{,s} + \omega^{k s} \omega^{i j}_{,s} =0  \,.
\end{equation}}\end{theorem}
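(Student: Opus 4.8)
The plan is to unpack the two abstract requirements recalled above — that the operator $\omega$ be skew-adjoint and that its Schouten square vanish, $[\omega,\omega]=0$ — into pointwise identities on the symbol $\omega^{ij}(u)$, using that the bilinearity and the derivation property of $\{\,,\,\}_\omega$ are automatic for local operators. Since $\omega^{ij}$ has order $0$, its formal adjoint is just the transpose $(\omega^{\dagger})^{ij}=\omega^{ji}$, so skew-adjointness is literally $\omega^{ij}=-\omega^{ji}$. Equivalently, pairing against densities, $\{P,Q\}_\omega+\{Q,P\}_\omega=\int \frac{\delta p}{\delta u^i}\bigl(\omega^{ij}+\omega^{ji}\bigr)\frac{\delta q}{\delta u^j}\,dx$, and because the variational derivatives of local functionals can be prescribed to be arbitrary functions (the pairing separates points), this vanishes for all $P,Q$ if and only if \eqref{cond3} holds.

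Next I would turn to the Jacobi identity. Assuming \eqref{cond3}, compute the cyclic sum $\{\{P,Q\}_\omega,R\}_\omega+\{\{Q,R\}_\omega,P\}_\omega+\{\{R,P\}_\omega,Q\}_\omega$ for arbitrary functionals $P,Q,R$. A density for $\{P,Q\}_\omega$ is $\frac{\delta p}{\delta u^i}\,\omega^{ij}(u)\,\frac{\delta q}{\delta u^j}$, and applying the variational derivative $\delta/\delta u^l$ to the corresponding functional produces two types of contributions: terms carrying a derivative $\omega^{ij}_{,s}$ of the symbol, and terms in which the Euler operator has differentiated one of the factors $\delta p/\delta u^i$ or $\delta q/\delta u^j$ (hence involving a second-order, i.e. Fréchet-type, derivative of $p$ or $q$). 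In the cyclic sum the terms of the second type cancel identically — this is the standard cancellation at the heart of the bivector calculus, and it is precisely where the skew-symmetry \eqref{cond3} is used — leaving only $\int \frac{\delta p}{\delta u^i}\frac{\delta q}{\delta u^j}\frac{\delta r}{\delta u^k}\bigl(\omega^{is}\omega^{jk}_{,s}+\omega^{js}\omega^{ki}_{,s}+\omega^{ks}\omega^{ij}_{,s}\bigr)\,dx$. Since this expression is (up to normalization) the coefficient of the trivector $[\omega,\omega]$, and since the variational derivatives are arbitrary, the cyclic sum vanishes for all $P,Q,R$ — i.e. $[\omega,\omega]=0$ — if and only if \eqref{cond4} holds. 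This single computation gives both directions of the equivalence at once.

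It is worth noting the geometric content that makes the result transparent: the hydrodynamic-type functionals $\int f(u)\,dx$ form a subalgebra for $\{\,,\,\}_\omega$, and on it $\{\,,\,\}_\omega$ is simply the integral of the pointwise expression $\omega^{ij}(u)\,\partial_i f\,\partial_j g$, so \eqref{cond3}--\eqref{cond4} are exactly the conditions for $\omega^{ij}(u)$ to be a finite-dimensional Poisson bivector on the space of the $u^j$ (see e.g. \cite{Olver:ApLGDEq, mokhov98:_sympl_poiss}); necessity follows by restriction to this subalgebra and sufficiency from the identity of the previous paragraph, which shows that vanishing of the cyclic symbol forces Jacobi on \emph{all} local functionals. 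The main obstacle is precisely the Jacobi computation: one must carefully track the terms generated by the Euler operator acting on a product containing variational derivatives and verify that the second-derivative contributions drop out of the cyclic sum. That cancellation, which relies essentially on \eqref{cond3}, is the only genuinely non-formal step; everything else is bookkeeping together with the elementary fact that variational derivatives of local functionals can be prescribed freely.
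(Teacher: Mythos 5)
Your argument is correct, but note that the paper does not actually prove Theorem~\ref{th:ham_omega}: it is recalled as a classical fact about ultralocal (order-zero) Poisson brackets, so there is no in-paper proof to compare against. Your route is the standard one and it works: for a multiplication operator the formal adjoint is the transpose, so skew-adjointness is literally \eqref{cond3}; and in the cyclic Jacobi sum the contributions containing second variational derivatives of $p,q,r$ cancel by the self-adjointness of the variational Hessian together with \eqref{cond3}, leaving the contraction of $\omega^{is}\omega^{jk}_{,s}+\omega^{js}\omega^{ki}_{,s}+\omega^{ks}\omega^{ij}_{,s}$ (which is totally antisymmetric in $i,j,k$ once $\omega$ is skew) with the three variational derivatives. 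One small caution: your phrase that ``variational derivatives of local functionals can be prescribed to be arbitrary functions'' is not literally true (they must satisfy the Helmholtz conditions), but you repair this yourself in the final paragraph by restricting to hydrodynamic functionals $\int f(u)\,dx$, whose variational derivatives are gradients $\partial f/\partial u^i$ and whose values at a point span all covectors; since the offending tensor depends on $u$ only, this restriction already forces it to vanish, so the proof is complete.
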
 

So that,  the following result holds.
\begin{theorem}\label{thm1}
The operator \eqref{nhop} is Hamiltonian if and only if $g^{i j}\, \partial_x + b^{i j}_{\,\, k} \, u^k_x$ 
is Hamiltonian, $\omega^{i j}$ is Hamiltonian, and the compatibility conditions are satisfied
{    \begin{align}\label{cond1}
    \Phi^{i j k} &= \Phi^{k i j} \,,
\\\label{eq:phi}
   \Phi^{i j k}_{,r}& = \sum_{(i,j,k)}  b^{s i}_{r}\, \omega^{j k}_{,s} + \left( b^{i j}_{r,s}- b^{i j}_{s,r}  \right)\omega^{s k}\,,  
\end{align}}
where $\Phi^{i j k}$ is the $(3,0)$-tensor
{\begin{align}
        \Phi^{i j k} & = g^{i s}\,  \omega^{j k}_{,s} - b^{i j}_{s} \, \omega^{s k} - b^{i k}_{s} \, \omega^{j s} \,. \end{align} }
\end{theorem}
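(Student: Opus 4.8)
The plan is to apply the criterion recalled in the Introduction: $C$ is Hamiltonian precisely when it is skew-adjoint and its Schouten bracket with itself vanishes, $[C,C]=0$. Write $C=A+\omega$, where $A^{ij}=g^{ij}\partial_x+b^{ij}_k u^k_x$ is the Dubrovin--Novikov summand and $\omega^{ij}$ the ultralocal one. I would first dispose of skew-adjointness. The formal adjoint of a local operator does not raise its differential order, so in $C^{*}=-C$ one may compare the coefficient of $\partial_x$ — which gives $g^{ij}=g^{ji}$ — and then match the remaining ultralocal part by separating it according to its degree in the jet variables: the degree-one piece gives $\partial_k g^{ij}=b^{ij}_k+b^{ji}_k$ and the degree-zero piece gives $\omega^{ij}=-\omega^{ji}$. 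Hence $C$ is skew-adjoint if and only if $A$ is skew-adjoint (the first two relations of Theorem~\ref{th:ham_A}) and $\omega$ is skew-adjoint (condition \eqref{cond3}).

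Next I would expand the Schouten bracket by bilinearity, $[C,C]=[A,A]+2\,[A,\omega]+[\omega,\omega]$, and use the standard grading that assigns weight $1$ to each $x$-derivative (so $\partial_x,u^k_x$ have weight $1$, $u^k_{xx}$ weight $2$, $u^k$ weight $0$). Under it the three trivectors on the right-hand side are homogeneous of weights $2$, $1$ and $0$ respectively; being of pairwise distinct weights they cannot cancel, so $[C,C]=0$ is equivalent to $[A,A]=0$, $[\omega,\omega]=0$ and $[A,\omega]=0$ holding simultaneously. By Theorem~\ref{th:ham_A}, $[A,A]=0$ is equivalent to the remaining Dubrovin--Novikov relations there, including the curvature identity \eqref{curv}; by Theorem~\ref{th:ham_omega}, $[\omega,\omega]=0$ is equivalent to \eqref{cond4}. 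Thus the whole statement reduces to showing $[A,\omega]=0\iff\eqref{cond1}\text{ and }\eqref{eq:phi}$.

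For the mixed term I would pass to the functional polyvector ($\theta$) formalism, representing $A$ and $\omega$ by the bivectors $P_A=\tfrac12\int\theta_i(g^{ij}\theta_{j,x}+b^{ij}_k u^k_x\theta_j)\,dx$ and $P_\omega=\tfrac12\int\theta_i\,\omega^{ij}\theta_j\,dx$, for which $[P_A,P_\omega]$ is computed by a single variational differentiation. After substitution, integration by parts to bring it to canonical form, and use of the skew-adjointness relations already secured, $[P_A,P_\omega]$ becomes a cubic-in-$\theta$ functional whose integrand has weight exactly $1$. Its part carrying one $\theta$-derivative has coefficient the combination $\Phi^{ijk}=g^{is}\omega^{jk}_{,s}-b^{ij}_s\omega^{sk}-b^{ik}_s\omega^{js}$, and its vanishing inside the totally antisymmetric trivector forces the cyclic symmetry \eqref{cond1}; its part carrying a factor $u^r_x$ vanishes exactly when \eqref{eq:phi} holds. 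I would also verify that $\Phi^{ijk}$ transforms as a $(3,0)$-tensor, in agreement with the diffeomorphism invariance of quasilinear systems recorded in the Proposition above, so that the two conditions are coordinate independent.

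The main obstacle I anticipate is the explicit evaluation and tidying of $[A,\omega]$: since no non-degeneracy hypothesis on $g^{ij}$ is assumed, one cannot fall back on the classical normalisation of $A$ to a flat metric with $b^{ij}_k=-g^{is}\Gamma^j_{sk}$, and every manipulation has to be carried out directly on the coefficients $g^{ij},b^{ij}_k,\omega^{ij}$, with the output massaged into the compact tensorial form \eqref{cond1}--\eqref{eq:phi}. A secondary point requiring care is to confirm that these two relations together are neither weaker nor stronger than $[A,\omega]=0$; in particular, that the integrability condition obtained by antisymmetrising \eqref{eq:phi} and comparing with a derivative of \eqref{cond1} is automatically implied once $A$ and $\omega$ are themselves Hamiltonian, so that nothing is over- or under-counted.
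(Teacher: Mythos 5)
Your strategy is sound and is in fact the standard argument for this statement (which the paper quotes from the literature, essentially Dubrovin--Novikov and Mokhov's review, without reproducing a proof): skew-adjointness splits by polynomial degree in the jet variables into the symmetry of $g$, the relation $\partial_k g^{ij}=b^{ij}_k+b^{ji}_k$ and the antisymmetry of $\omega$, while the grading by total $x$-derivative order splits $[C,C]=[A,A]+2[A,\omega]+[\omega,\omega]$ into three independently vanishing homogeneous trivectors, reducing everything to the mixed bracket. One point deserves more care than your sketch gives it: the two weight-one constituents of $[A,\omega]$, namely the $\theta_i\theta_j\theta_{k,x}$ terms and the $u^r_x\,\theta_i\theta_j\theta_k$ terms, are not independent before normalisation, since $\int f\,\theta_i\theta_j\theta_{k,x}\,dx=-\int\bigl(f_{,r}u^r_x\,\theta_i\theta_j\theta_k+f\,\theta_{i,x}\theta_j\theta_k+f\,\theta_i\theta_{j,x}\theta_k\bigr)dx$; so you must first extract the condition $\Phi^{ijk}=\Phi^{kij}$ as the obstruction to reducing the derivative part to canonical form, and only then does the residual ultralocal-in-$\theta$ part yield \eqref{eq:phi}. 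With that ordering made explicit, and granting the (admittedly lengthy but routine) coefficient bookkeeping you defer, the argument is complete and does not require $\det g\neq 0$, exactly as the theorem demands.
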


Moreover,  in \cite{FerMok1} Ferapontov and Mokhov proved that 
\begin{proposition}If $\det g^{ij}\neq 0$,  the operator \eqref{nhop} is Hamiltonian if and only if 
\begin{equation}
\tilde{C}^{ij}=g^{ij}\partial_x+\Gamma^{ij}_ku^k_x
\end{equation}
is a Dubrovin-Novikov operator, 
$\omega^{ij}
$ defines a classical finite-dimensional Poisson structure and the following compatibilty conditions are satisfied:
\begin{gather}
\nabla^i\omega^{jk}+\nabla^{j}\omega^{ik}=0\label{condd1}\\
\nabla_s\nabla_k\omega^{ij}=0\label{condd2}
\end{gather} where $\nabla^i=g^{is}\nabla_s$ and $\nabla_s$ is the covariant derivative with respct to $g$.
\end{proposition}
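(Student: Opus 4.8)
The plan is to deduce this non-degenerate reformulation from Theorem~\ref{thm1} by rewriting each of its conditions in terms of the Levi-Civita connection of the (now invertible) metric $g$. First I would record that, since $\det g^{ij}\neq 0$, the Dubrovin--Novikov part $g^{ij}\partial_x + b^{ij}_k u^k_x$ is Hamiltonian if and only if $g_{ij}=(g^{ij})^{-1}$ is flat and $b^{ij}_k=-g^{is}\Gamma^j_{sk}$, i.e. $b^{ij}_k=\Gamma^{ij}_k$ in the notation of the statement; this is exactly the Dubrovin--Novikov theorem quoted after Theorem~\ref{th:ham_A}. So $\tilde C^{ij}$ being a Dubrovin--Novikov operator is equivalent to the first block of hypotheses in Theorem~\ref{thm1}, and ``$\omega^{ij}$ a classical finite-dimensional Poisson structure'' is literally Theorem~\ref{th:ham_omega}. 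It remains to show that, \emph{assuming} these, the compatibility conditions \eqref{cond1}--\eqref{eq:phi} are equivalent to \eqref{condd1}--\eqref{condd2}.

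The key computation is to re-express $\Phi^{ijk}$ covariantly. Substituting $b^{ij}_s=-g^{ir}\Gamma^j_{rs}$ into $\Phi^{ijk}=g^{is}\omega^{jk}_{,s}-b^{ij}_s\omega^{sk}-b^{ik}_s\omega^{js}$ and using that $\omega$ is a $(2,0)$-tensor, one gets $\Phi^{ijk}=g^{is}\big(\omega^{jk}_{,s}+\Gamma^j_{sr}\omega^{rk}+\Gamma^k_{sr}\omega^{jr}\big)=g^{is}\nabla_s\omega^{jk}=\nabla^i\omega^{jk}$. With this identification, \eqref{cond1} reads $\nabla^i\omega^{jk}=\nabla^k\omega^{ij}$; combined with the skew-symmetry $\omega^{jk}=-\omega^{kj}$ (so that $\nabla^i\omega^{jk}$ is skew in $j,k$) this cyclic symmetry is equivalent to the total skew-symmetry of $\nabla^i\omega^{jk}$ in all three indices, which in turn is equivalent to $\nabla^i\omega^{jk}+\nabla^j\omega^{ik}=0$, i.e. \eqref{condd1}. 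Then I would turn to \eqref{eq:phi}: writing its right-hand side with $b^{ij}_r=-g^{is}\Gamma^j_{sr}$ and comparing with the coordinate expansion of $\nabla_r\Phi^{ijk}=\nabla_r\nabla^i\omega^{jk}$, the connection terms produced by differentiating $\Phi^{ijk}$ should exactly cancel against the $\Gamma$-terms on the right, leaving $\nabla_r\nabla^i\omega^{jk}=0$; raising the index with $g$ and using flatness (so covariant derivatives commute) this is \eqref{condd2}, $\nabla_s\nabla_k\omega^{ij}=0$.

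The main obstacle is bookkeeping in this second step: one must verify that the ``non-tensorial'' pieces match up correctly, i.e. that $\Phi^{ijk}_{,r}-\big(\text{RHS of \eqref{eq:phi}}\big)$ equals $g^{is}\nabla_r\nabla_s\omega^{jk}$ on the nose. This requires expanding $\nabla_r\nabla_s\omega^{jk}$ in Christoffel symbols, re-expressing the cyclic sum $\sum_{(ijk)} b^{si}_r\omega^{jk}_{,s}$ and the antisymmetrized term $(b^{ij}_{r,s}-b^{ij}_{s,r})\omega^{sk}$ via $\Gamma$, and invoking the flatness of $g$ (vanishing Riemann tensor, $\Gamma^j_{sr,k}-\Gamma^j_{kr,s}+\Gamma^j_{km}\Gamma^m_{sr}-\Gamma^j_{sm}\Gamma^m_{kr}=0$) to kill the curvature-type remainder. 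Once \eqref{condd1} is in hand, one should also note that $\nabla_r\nabla_s\omega^{jk}$ is automatically symmetric in $r,s$ modulo curvature, which is consistent with \eqref{condd2} and with the symmetrizations implicit in \eqref{eq:phi}. I would also remark that the converse direction is immediate by reversing these substitutions, so the equivalence is genuinely an ``if and only if''. Finally, I would point out that \eqref{condd1}--\eqref{condd2} are precisely the conditions found by Ferapontov and Mokhov in \cite{FerMok1}, so the proposition is a restatement of their result within the present framework.
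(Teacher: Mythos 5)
Your strategy is sound, but note first that the paper itself offers no proof of this proposition: it is quoted verbatim from Ferapontov--Mokhov \cite{FerMok1}, so you are supplying an argument where the paper supplies only a citation. Your derivation from Theorem \ref{thm1} is the natural one and its main steps check out: under non-degeneracy the Dubrovin--Novikov theorem forces $b^{ij}_s=-g^{ir}\Gamma^j_{rs}$ with $g$ flat, the identity $\Phi^{ijk}=g^{is}\nabla_s\omega^{jk}=\nabla^i\omega^{jk}$ is correct, and your observation that cyclic invariance of $\Phi^{ijk}$ together with its skew-symmetry in $(j,k)$ is equivalent to total antisymmetry, hence to \eqref{condd1}, is a clean way to pass from \eqref{cond1} to the Ferapontov--Mokhov form. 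Two remarks on the step you rightly flag as the obstacle. First, the claim that $\Phi^{ijk}_{,r}$ minus the right-hand side of \eqref{eq:phi} equals $g^{is}\nabla_r\nabla_s\omega^{jk}$ \emph{on the nose} is stronger than what you need and is likely false as stated: the difference will in general contain terms proportional to first covariant derivatives of $\omega$ and to the curvature, which vanish only after invoking flatness and, possibly, \eqref{condd1}; the correct claim is equivalence of the two conditions modulo the already-established hypotheses, not termwise equality. Second, you can bypass the Christoffel bookkeeping entirely: since $g$ is flat you may work in flat coordinates, where $g^{ij}$ is constant and $b^{ij}_k=0$, so that $\Phi^{ijk}=g^{is}\omega^{jk}_{,s}$, the right-hand side of \eqref{eq:phi} vanishes identically, and \eqref{eq:phi} reads $g^{is}\omega^{jk}_{,sr}=0$, i.e.\ $\nabla_r\nabla_s\omega^{jk}=0$ by non-degeneracy; both sides being tensorial statements, this proves \eqref{condd2} in all coordinates, and the converse follows by reading the same identities backwards. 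This flat-coordinate route also makes transparent the remark, repeated in the paper after the proposition, that \eqref{condd2} is a consequence of \eqref{condd1} (differentiating the linear relation $g^{is}\omega^{jk}_{,s}+g^{js}\omega^{ik}_{,s}=0$ twice and cycling), so your proof would be strengthened by recording that observation as well.
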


As Ferapontov and Mokhov remarked in \cite{FerMok1},  condition \eqref{condd2} is just a corollary of \eqref{condd1}, then it does not give any additional requirement on $\omega$. 

Classifications of non-homogeneous operators of type $1+0$ were presented in both the degenerate and non-degenerate case, separately.
At first,  Dubrovin and Novikov proved \cite{DubrovinNovikov:PBHT} that if $\det g^{ij}\neq 0$, then there exist coordinates $\bar{u}^i=\bar{u}^i(u)$ such that the operator \eqref{nhop} takes the following form
\begin{equation}
C^{ij}=g^{ij}_0\partial_x+(c^{ij}_k\bar{u}^k+h^{ij}_0)
\end{equation}
where $g^{ij}_0=cost$ ($\Gamma^{ij}_k=0$), $c^{ij}_k$ are structural constants of the semisimple Lie algebra having $g^{ij}_0$ as Killing form and $h^{ij}_0$ is an arbitrary cocycle of the Lie algebra.  Otherwise, if $\det g^{ij}=0$ the complete classification of such operators was presented by Dell'Atti and the present author in \cite{DellVer1}.

\section{Compatibility conditions}\label{sec2}

The aim of this paper is to compute conditions which are necessary to find a Hamiltonian structure for a fixed quasilinear evolutionary system \eqref{a1}.  Hereafter, we will define such conditions as \emph{compatibility conditions}.  

\subsubsection{Differential coverings and compatibility conditions}

In order to compute compatibility conditions, we emphasize that in \cite{KerstenKrasilshchikVerbovetsky:HOpC}  the authors introduced a method to find necessary conditions for given systems to admit Hamiltonian formulation with a fixed operator.  The procedure makes use of the theory of differential coverings and was deeply investigated for homogeneous and nonlocal operators in \cite{VerVit1} and \cite{Ver1}.  In the following,  we will briefly show how it works.

Let us consider quasilinear systems of first-order partial differential equations
\begin{equation}\label{hyd}
{ F^i= u^i_t-V^i_j(u)u^j_x-W^i(u)=0\qquad i=1,2,\dots , n.}
\end{equation}

We recall that symmetries of \eqref{hyd} are vector functions $\varphi=\varphi^i$ such that $\ell_F(\varphi)=0$ when $F=0$, where $\ell_F$ is the Frech\'et derivative or the linearization of $\varphi$ with respect to $F$.  Explicitly:
\begin{equation}
\ell_F(\varphi)^i= D_t(\varphi^i)-(V^i_{j,l}u^j_x+W^i_{,l})\varphi^l-V^i_jD_x\varphi^j\qquad i=1,2,\dots , n.
\end{equation} Moreover, conservation laws are equivalence classes of 1-forms $\omega=a dt+b dx$ that are closed modulo $F=0$ up to total divergencies. A conservation law is uniquely represented by generating functions $\psi_j=\delta b/\delta u^j$, which are called cosymmetries of the system. Such vector functions $\psi$ satisfy $\ell_F^*(\psi)=0$, where $\ell_F^*$ is the formal adjoint of $\ell_F$, explicitly:
\begin{align}
\ell_F^*(\psi)_i&=-D_t\psi_i+(V^k_{i,j}u^j_x-V^k_{j,i}u^j_x-W^k_{,i})\psi_k+V^k_iD_x\psi_k\qquad i=1,2,\dots , n.
\end{align}

In \cite{Olver:ApLGDEq,KVV17},  it is shown that if $A$ is a Hamiltonian operator for $F=0$, then \begin{equation}\label{bivec}
\ell_F\circ A= A^*\circ \ell_F^*.
\end{equation}
where $A^*$ is the adjoint of the operator $A$. Then,  if $\psi_k$ is a cosymmetry it follows that $\ell_F(A^{ij}\psi_j)=0$,  i.e.  $\varphi^i=A^{ij}\psi_j$ is a symmetry of \eqref{hyd}.  Equivalently,  this means that Hamiltonian operators map conserved quantities into symmetries. 

In \cite{KerstenKrasilshchikVerbovetsky:HOpC}  the authors  introduce new variables $p_i$, such that it is possible to associate $D_x\psi_i$ to $p_{i,x}$, $D^2_{x}\psi_i$ to $p_{i,xx}$ and so on.  By this, they associate to each differential operator $A^{ij}=a^{ij\sigma}D_\sigma\psi_j$ a linear vector function $A^{i}=a^{ij\sigma}p_{j,\sigma}$ where $\sigma$ identifies the derivation order with respect to $x$.  For our purposes,  we introduce the \emph{cotangent covering} 
for systems \eqref{hyd}:
\begin{equation}\label{a22}
\mathcal{T}^*:\begin{cases}u^i_t=V^i_ju^j_x+W^i\\p_{i,t}=(V^k_{i,j}u^j_x-V^k_{j,i}u^j_x-W^k_{,i})p_k+V^k_ip_{k,x}\end{cases}
\end{equation}

Analogously,  we introduce new variables $q^i$ such that  we correspond $q^i$ for vector functions $\varphi^i$, $q^i_x$ to $D_x\varphi^i$ and so on. Then, we define the \emph{tangent covering} for non-homogeneous systems of type \eqref{hyd}:
\begin{equation}\label{aa1}
\mathcal{T}:\begin{cases}u^i_t=V^i_ju^j_x+W^i\\
q^i_t=(V^i_{j,l}u^j_x+W^i_{,l})q^l+V^i_jq^j_x\end{cases}.
\end{equation}

Both cotangent and tangent coverings are form-invariant under trasformations of type $\tilde{u}^i=\tilde{U}^i(\textbf{u})$. 
Finally,  the following result holds true:

\begin{theorem}[\cite{KerstenKrasilshchikVerbovetsky:HOpC}]A linear vector function $A$ in total derivatives satisfies \eqref{bivec} if and only if the equation
\begin{equation}\label{nes}
\ell_F(A(\textbf{p}))=0
\end{equation}holds on the cotangent covering \eqref{a22}.\end{theorem}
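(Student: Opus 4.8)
The plan is to exploit the structure of the cotangent covering \eqref{a22}: its second line is precisely the equation $\ell_F^*(\textbf{p})=0$ solved for $p_{i,t}$, so $\mathcal{T}^*$ is the system cut out by $F^i=0$ together with $(\ell_F^*(\textbf{p}))_i=0$, and on its infinite prolongation the only relations among the jet variables $p_i,p_{i,x},p_{i,xx},\dots$ and their $t$-derivatives are the differential consequences of $(\ell_F^*(\textbf{p}))_i=0$. Since $A(\textbf{p})$ is homogeneous of degree one in the $p$-jets and $\ell_F$ raises the order in $t$ by at most one, a function of the shape $\ell_F(A(\textbf{p}))$ vanishes on $\mathcal{T}^*$ exactly when $\ell_F(A(\psi))=\mathcal{B}(\ell_F^*(\psi))$ holds identically for some local operator $\mathcal{B}$ in $D_x$ with coefficients on the prolongation of $F=0$, i.e. when $\ell_F\circ A=\mathcal{B}\circ\ell_F^*$ as operators on the system $F=0$. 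This translation --- between operator identities on $F=0$ and the vanishing of $\textbf{p}$-linear functions on $\mathcal{T}^*$, via the correspondence $B\leftrightarrow B(\textbf{p})$ --- is the engine of the proof.

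Granting it, the implication \eqref{bivec}$\Rightarrow$\eqref{nes} is immediate: substituting $\psi=\textbf{p}$ in \eqref{bivec} gives $\ell_F(A(\textbf{p}))=A^*(\ell_F^*(\textbf{p}))$, and since $A^*$ differentiates only in $x$ while $\ell_F^*(\textbf{p})=0$ on \eqref{a22}, the right-hand side vanishes on $\mathcal{T}^*$, hence so does $\ell_F(A(\textbf{p}))$. This is exactly the half of the theorem one uses to obtain \emph{necessary} conditions for a quasilinear system to carry a prescribed Hamiltonian structure.

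For the converse I would start from the identity $\ell_F\circ A=\mathcal{B}\circ\ell_F^*$ on $F=0$ produced above and identify $\mathcal{B}$. Two ingredients do this: $(\ell_F\circ A)^*=A^*\circ\ell_F^*$ holds automatically, as the formal adjoint of a composition; and the basic relation of the adjoint linearization --- $\psi_i\,\ell_F(\varphi)^i-\ell_F^*(\psi)_i\,\varphi^i$ is a total divergence --- together with isolating in $\ell_F\circ A=\mathcal{B}\circ\ell_F^*$ the part acting on the $t$-derivative of the argument (equal to $A$ on the left and $-\mathcal{B}$ on the right) forces $\mathcal{B}=-A$, so that $\ell_F\circ A=-A\circ\ell_F^*$. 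Because $A$ is skew-adjoint --- as it must be to be a candidate Hamiltonian operator, and as the (co)tangent-covering set-up is designed for --- this coincides with $\ell_F\circ A=A^*\circ\ell_F^*$, i.e. \eqref{bivec}.

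The step I expect to be the real obstacle is this reconstruction: converting the vanishing of the single function $\ell_F(A(\textbf{p}))$ on $\mathcal{T}^*$ back into the operator equality, and pinning the reconstructed operator down. When $\det V^i_j\neq0$ the leading symbol of $\ell_F^*$ is invertible, $\ell_F^*$ cancels on the right, and the operator $\mathcal{B}$ is unique and is easily identified; in the degenerate case $\mathcal{B}$ is determined only up to operators annihilating $\ell_F^*$, so one has to argue order by order in $D_x$ while tracking the kernel of $V^i_j$ --- precisely the delicacy that this paper is written to address. Modulo that bookkeeping the equivalence follows.
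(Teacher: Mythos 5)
The paper does not prove this statement---it is imported verbatim from \cite{KerstenKrasilshchikVerbovetsky:HOpC}---so your argument can only be measured against the standard proof, and in substance it reproduces it. The forward direction is exactly right. For the converse, the reconstruction you worry about is actually easier than you make it: writing $\ell_F=D_t-\ell_f$ and $\ell_F^*=-D_t-\ell_f^*$ with $\ell_f$ involving only $x$-derivatives, restriction to \eqref{a22} simply substitutes $p_{j,t}=-\ell_f^*(\textbf{p})_j$, so that
\begin{equation*}
\ell_F(A(\textbf{p}))\big|_{\mathcal{T}^*}=\bigl((D_tA)-\ell_f\circ A-A\circ\ell_f^*\bigr)(\textbf{p}),
\end{equation*}
a function linear in the \emph{free} fibre coordinates $p_{j,\sigma}$; its vanishing is therefore equivalent to the operator identity $\ell_F\circ A=-A\circ\ell_F^*$ on $F=0$, i.e.\ $\mathcal{B}=-A$ is forced by matching the coefficient of $D_t$ alone (your other two ``ingredients'' are not needed). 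In particular your closing paragraph misdiagnoses the difficulty: no inversion of $V^i_j$, no cancellation of $\ell_F^*$ on the right, and no order-by-order tracking of $\ker V^i_j$ is required, because the covering equations are already solved for $p_{i,t}$; degeneracy of $V^i_j$ or of $g^{ij}$ is irrelevant to this theorem and enters the paper only later.

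The one genuine gap is the final step $-A\circ\ell_F^*=A^*\circ\ell_F^*$, which needs $A^*=-A$. Skew-adjointness is not among the hypotheses of the statement as written, and without it the ``if'' direction is literally false: for $u_t=u_x$ the self-adjoint operator $A=D_x^2$ satisfies \eqref{nes} on the covering (where $p_t=p_x$), yet $\ell_F\circ A-A^*\circ\ell_F^*=2D_x^2(D_t-D_x)$ does not vanish on functions with explicit $(x,t)$-dependence, so \eqref{bivec} fails. Your appeal to $A$ being ``a candidate Hamiltonian operator'' is consistent with how the paper uses the result (it explicitly restricts to Hamiltonian, hence skew-adjoint, operators), but the hypothesis should be stated rather than smuggled in; with it made explicit, your proof closes correctly.
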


We can finally read conditions \eqref{nes} as necessary for system $F=0$ to be Hamiltonian with the structure given in terms of the operator $A^{ij}$.  So that, computing \eqref{nes} is equivalent to find  the compatibility conditions. 

The main advantadge of this approach is that it allows to compute explicit conditions without making use of a Hamiltonian functional $H=\int{h\, dx}$.  However, these \emph{compatibility conditions} do not guarantee that an operator $A$ is skew-adjoint or its Schouten brackets annihilate, then it does not ensure the Hamiltonianity of the operator.  For this reason,  they are only necessary.  

{ However, finding compatibilty conditions (even if only necessary \emph{a priori}) is a useful tool when studying the Hamiltonian formalism of a given evolutionary system. One can try to solve the obtained equations in order to search for a Hamiltonian structure or can exclude some operators if such conditions are not satisfied.  Moreover,  when the operator is non-degenerate,  one can invert the leading coefficient and search for an explicit Hamiltonian density. }

\vspace{5mm}

{ In what follows,  we consider operators $A$ to be Hamiltonian. } Moreover, we will focus on quasilinear systems only.  To this aim we gradually analyse 0-order systems,  first order homogeneous systems and finally first order non-homogeneous ones.

\vspace{3mm}

\subsubsection*{A.  0-order systems}

Let us firstly consider 0-order systems and search for a Hamiltonian formulation with a 0-order operator.
\begin{proposition}\label{lem1}Let us consider the system
\begin{equation}\label{a2}
u^i_t=W^i(u), \qquad i=1,2,\dots n
\end{equation}

Then,  \eqref{a2} admits Hamiltonian formulation with the non-degenerate ultralocal operator $\omega^{ij}$ if and only if 
\begin{equation}\label{890}\tilde{\nabla}^iW^j=\tilde{\nabla}^jW^i,\end{equation}
where {$\tilde{\nabla}^i=\omega^{is}\tilde{\nabla}_s$} and {$\tilde{\nabla}_s$ is the covariant derivative with respect to $\omega$, where
\begin{equation}
\tilde{\Gamma}^j_{ik}=-\frac{1}{2}\omega_{ia}\omega^{aj}_{,k}.
\end{equation} are the related symbols of the connection.}
\end{proposition}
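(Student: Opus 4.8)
The plan is to specialise the cotangent-covering criterion of \cite{KerstenKrasilshchikVerbovetsky:HOpC} to the case $V^i_j\equiv 0$, $A^{ij}=\omega^{ij}$, and then to recognise the resulting equation as \eqref{890}. For the $0$-order system \eqref{a2} the linearisation is $\ell_F(\varphi)^i=D_t\varphi^i-W^i_{,l}\varphi^l$, and the cotangent covering \eqref{a22} becomes $u^i_t=W^i$, $p_{i,t}=-W^k_{,i}p_k$. Taking the linear vector function $A^i=\omega^{ij}p_j$ attached to the ultralocal operator and imposing \eqref{nes}, i.e.\ $\ell_F(A(\mathbf p))=0$ on the covering, one computes $D_t(\omega^{ij}p_j)$ by the chain rule, substitutes $u^k_t=W^k$ and $p_{j,t}=-W^m_{,j}p_m$, and collects the coefficient of $p_j$. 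This gives the single requirement
\[
\omega^{ij}_{,k}\,W^k-\omega^{is}\,W^j_{,s}-W^i_{,s}\,\omega^{sj}=0 .
\]

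To put this in tensorial form I would use the elementary identity $\omega^{is}\tilde\Gamma^j_{sm}=-\tfrac12\,\omega^{is}\omega_{sa}\,\omega^{aj}_{,m}=-\tfrac12\,\omega^{ij}_{,m}$, which yields $\omega^{is}W^j_{,s}=\tilde\nabla^i W^j+\tfrac12\,\omega^{ij}_{,m}W^m$ and, by skew-symmetry of $\omega$, $W^i_{,s}\omega^{sj}=-\tilde\nabla^j W^i+\tfrac12\,\omega^{ij}_{,m}W^m$. Substituting, the two $\tfrac12\,\omega^{ij}_{,m}W^m$ terms cancel against $\omega^{ij}_{,k}W^k$ and the displayed condition reduces exactly to $\tilde\nabla^i W^j-\tilde\nabla^j W^i=0$, which is \eqref{890}. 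Since the cotangent covering and the operators $\ell_F$, $A$ are form-invariant under $\tilde u^i=\tilde U^i(u)$, the equation \eqref{890} is coordinate independent; this settles the necessity.

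For sufficiency I would show that \eqref{890} forces the existence of a density $h(u)$ with $W^i=\omega^{ij}h_{,j}$, so that \eqref{a2} is Hamiltonian with operator $\omega^{ij}$ and functional $H=\int h\,dx$ (for which $\delta H/\delta u^j=h_{,j}$ since $h$ carries no derivatives). As $\omega$ is assumed Hamiltonian it is a non-degenerate finite-dimensional Poisson structure, so by Darboux's theorem there are local coordinates in which $\omega^{ij}$ is constant, hence $\tilde\Gamma^j_{ik}=0$ and \eqref{890} becomes $\omega^{is}W^j_{,s}=\omega^{js}W^i_{,s}$. Lowering both free indices with the constant inverse matrix $\omega_{ij}$ and setting $W_i:=\omega_{ij}W^j$, this reads $W_{i,j}=W_{j,i}$, so the $1$-form $W_i\,dx^i$ is closed; the Poincaré lemma provides $h$ with $W_i=h_{,i}$, whence $W^i=\omega^{ij}W_j=\omega^{ij}h_{,j}$. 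By the coordinate independence of \eqref{890} noted above, the conclusion persists in the original coordinates.

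I expect the sufficiency step to be the only real point of care: one must observe that in this $0$-order situation the covering condition \eqref{890} is not merely necessary but can actually be integrated, and the integration rests on the Jacobi identity \eqref{cond4} for $\omega$ (used through Darboux). A more computational, coordinate-free alternative would be to verify directly that \eqref{890}, together with the closedness of the $2$-form $\omega_{ij}$ equivalent to \eqref{cond4}, implies $(\omega_{ki}W^i)_{,l}=(\omega_{li}W^i)_{,k}$; that index chase is the messiest part one runs into if Darboux coordinates are to be avoided.
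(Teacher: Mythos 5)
Your proof is correct, but it is organised quite differently from the paper's. The paper proves the proposition in a single stroke: by non-degeneracy, $W^i=\omega^{is}h_{,s}$ is equivalent to $h_{,j}=\omega_{js}W^s$, so a density $h$ exists if and only if the $1$-form $\omega_{js}W^s\,du^j$ is closed, i.e. $\omega_{si,j}W^i+\omega_{si}W^i_{,j}-\omega_{ji,s}W^i-\omega_{ji}W^i_{,s}=0$, and raising both free indices with $\omega^{ij}$ turns this into \eqref{890} --- no covering and no Darboux coordinates appear. You instead prove necessity through the cotangent-covering criterion (which is precisely the content of the Remark that follows the proposition, where the covering condition \eqref{8901} is recorded) and sufficiency through Darboux coordinates plus the Poincar\'e lemma. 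Both routes are sound, and they trade off as follows. Your version makes fully explicit, via the identity $\omega^{is}\tilde{\Gamma}^j_{sk}=-\tfrac{1}{2}\omega^{ij}_{,k}$ and the ensuing cancellation, the step the paper compresses into ``apply twice the inverse $2$-form''; moreover your covering computation yields $\omega^{ij}_{,k}W^k-\omega^{is}W^j_{,s}+\omega^{js}W^i_{,s}=0$, which is skew-symmetric in $(i,j)$ as it must be to reduce to \eqref{890}, whereas \eqref{8901} as printed is not --- the sign of its $\omega^{js}W^i_{,s}$ term appears to be a typo, and your form is the correct one. On the other hand, your sufficiency step invokes the Jacobi identity for $\omega$ (through Darboux), which the paper's direct integrability argument never needs, and the covering machinery is heavier than necessary for a $0$-order operator; the coordinate-free index chase you mention at the end is exactly the computation the paper performs, and it requires only non-degeneracy and the symmetry of second derivatives of $h$.
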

\begin{proof}[Proof of \ref{lem1}] Let us assume that $\omega$ is non-degenerate.  Then, the system is Hamiltonian if $W^i=\omega^{is}{\nabla}_sh$ for a certain  function $h=h(u)$. Such condition is satisfied if and only if the system is compatible, that is {$h_{,js}=h_{,sj}$}.  Having 
{$h_{,j}=\omega_{js}W^s$},  the compatibility conditions become:
\begin{equation}
\omega_{si,j}W^i+\omega_{si}W^i_{,j}-\omega_{ji,s}W^i-\omega_{ji}W^i_{,s}=0
\end{equation} by applying twice the inverse 2-form $\omega^{ij}$ we obtain $\tilde{\nabla}^iW^j-\tilde{\nabla}^jW^i=0$. \end{proof}

\vspace{3mm}

\begin{remark}Analogously we can compute $\ell_F(\omega^{ij}p_j)=0$, following the procedure described at the beginning of this section.  In this case,  the compatibility condition is given by 
{\begin{equation}\label{8901}
\omega^{ij}_{,s}W^s-\omega^{js}W^i_{,s}-\omega^{is}W^j_{,s}=0.
\end{equation} } \eqref{8901}  is obtained  without the assumption of non-degeneracy of $\omega$.  Note that  it reduces to \eqref{890} when $\omega$ is invertible. \end{remark} 

\vspace{5mm}

\subsubsection*{B.  First order homogeneous systems}
Let us now focus on the homogeneous case of quasilinear systems of first order,  i.e.  on systems of type 
\begin{equation}\label{homo}
u^i_t=V^i_j(u)u^j_x.
\end{equation} A natural assumption in Hamiltonian theories is to require the Hamiltonian density $H$ to depend only on the field variables $u^j$ (and not on their derivatives).  In such a case,  we usually say the density $h=h(u)$ to be of \emph{hydrodynamic type}.  Then,  system \eqref{homo} is Hamiltonian in the hydrodynamic sense if 
\begin{equation}
V^i_j=\nabla^i\nabla_jh=g^{is}\nabla_s\nabla_jh
\end{equation}where $g^{ij}$ is a flat non-degenerate metric and $\nabla$ is the Riemannian covariant derivative of $g$.  Note that systems \eqref{homo} are also known as hydrodynamic type systems.  Moreover,  following a notation introduced by Tsarev,  we define the matrix $V^i_j$ to be a \emph{Hamiltonian matrix}.

Let us now compute conditions similar to Proposition \ref{lem1} for hydrodynamic type systems and first order homogeneous Hamiltonian operators (not necessarily in the non-degenerate case).  In order to do this we apply the above described method.  Then,  let us consider a first-order homogeneous {\color{red} operator}:
\begin{equation}\label{eqse}
A^{ij}=g^{ij}\partial_x+b^{ij}_ku^k_x
\end{equation}
We now associate to \eqref{eqse} the linear vector function $A^i(\textbf{p})=g^{ij}p_{j,x}+b^{ij}_ku^k_xp_j$.

{\begin{theorem}\label{thm8}
If a quasilinear homogeneous system admits a Hamiltonian formulation with a Hamiltonian homogeneous operator of first order, the following conditions are then satisfied: 
 \begin{align}
  \label{eq:100}
  & g^{is}V^j_s=g^{js}V^i_s,
    \\   
    &\label{eq:103}
      g^{is}\left(V^j_{s,k}-V^j_{k,s}\right)+
      b^{ij}_sV^s_k-b^{sj}_kV^i_s = 0.
  \end{align}
\end{theorem}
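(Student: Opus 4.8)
The plan is to substitute the linear vector function $A^i(\mathbf{p}) = g^{ij}p_{j,x} + b^{ij}_k u^k_x p_j$ into the condition $\ell_F(A(\mathbf{p})) = 0$ and evaluate it on the cotangent covering \eqref{a22}, exactly as described in Section \ref{sec2}. First I would write out $\ell_F(A(\mathbf{p}))^i = D_t(A^i(\mathbf{p})) - V^i_{j,l}u^j_x A^l(\mathbf{p}) - V^i_j D_x A^j(\mathbf{p})$ (the $W$-terms drop because we are in the homogeneous case $W^i=0$). When computing $D_t(A^i(\mathbf{p}))$ I would use the evolution equations for both $u^i_t$ (namely $u^i_t = V^i_j u^j_x$) and $p_{i,t}$ (from \eqref{a22}, which with $W^i=0$ reads $p_{i,t} = (V^k_{i,j} - V^k_{j,i})u^j_x p_k + V^k_i p_{k,x}$), and likewise expand $D_x A^j(\mathbf{p})$ by the chain rule. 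The resulting expression is a polynomial in the jet variables $p_{j}$, $p_{j,x}$, $p_{j,xx}$ and $u^j_x$, $u^j_{xx}$, and since $\ell_F(A(\mathbf{p}))=0$ must hold identically on the covering, each independent coefficient must vanish separately.

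The key bookkeeping step is to collect coefficients according to these monomials. The highest-order term $p_{j,xx}$ will produce, after accounting for the $g^{ij}p_{j,x}$ piece interacting with $V^i_j D_x$ and with the $V^k_i p_{k,x}$ term in $p_{i,t}$, the symmetry relation $g^{is}V^j_s = g^{js}V^i_s$, i.e. \eqref{eq:100}; I would also need to verify that the coefficient of $p_{j,xx}$ alone and the mixed $u^k_x p_{j,x}$ terms are consistent with this. The coefficient of $p_{j,x}$ (with no extra $u$-derivatives, or with a single $u^k_x$) then yields, after using \eqref{eq:100} to simplify, the relation \eqref{eq:103}: $g^{is}(V^j_{s,k} - V^j_{k,s}) + b^{ij}_s V^s_k - b^{sj}_k V^i_s = 0$. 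The remaining coefficients — those of $p_j$ times quadratic expressions in $u$-jets — should either vanish identically or reduce to consequences of \eqref{eq:100}, \eqref{eq:103} together with the known Hamiltonianity conditions for $A^{ij}$ from Theorem \ref{th:ham_A} (in particular the skew-symmetry $g^{ij}=g^{ji}$ and $\partial_k g^{ij} = b^{ij}_k + b^{ji}_k$, which we are entitled to assume since $A$ is taken Hamiltonian).

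I expect the main obstacle to be the careful organization of the $D_t$ computation: because $p_{i,t}$ itself involves $u^j_x$ and $p_{k,x}$, differentiating $b^{ij}_k u^k_x p_j$ in $t$ generates cross-terms that must be matched against terms coming from $V^i_j D_x A^j$, and it is easy to lose or double-count a term. A clean way to handle this is to note that \eqref{bivec}, $\ell_F \circ A = A^* \circ \ell_F^*$, means the content of \eqref{nes} is that $A$ intertwines symmetries and cosymmetries; one could alternatively derive \eqref{eq:100}–\eqref{eq:103} by pairing: since $\psi = $ const and $\psi_i = g_{ij}$-type cosymmetries of a hydrodynamic system are related to Casimirs and momenta, applying $A$ to them and demanding the result solve $\ell_F(\varphi)=0$ reproduces the two displayed equations. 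But the most transparent route for the paper is the direct coefficient extraction on the covering, with \eqref{eq:100} extracted first from the top-order terms and then fed into the next-order terms to isolate \eqref{eq:103}. I would present only the two nontrivial coefficient equations explicitly and remark that the lower-order ones are automatically satisfied modulo the hypotheses.
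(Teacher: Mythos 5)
Your proposal follows essentially the same route as the paper: substitute $A^i(\mathbf{p})$ into $\ell_F(A(\mathbf{p}))=0$ on the cotangent covering, collect the coefficients of the jet monomials, and read off \eqref{eq:100} and \eqref{eq:103} (the latter in fact arises as the coefficient of $u^k_{xx}p_j$ rather than of the $u^k_x p_{j,x}$ terms). The paper additionally carries out in detail the redundancy check you defer, showing that the two remaining coefficient equations are differential consequences of \eqref{eq:100}, \eqref{eq:103} and the Hamiltonianity conditions of Theorem \ref{th:ham_A} --- crucially including the curvature-type condition \eqref{curv}, not only skew-symmetry and $g^{ij}_{,k}=b^{ij}_k+b^{ji}_k$.
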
}
In the specific case of non-degenerate $g^{ij}$, we substitute  coefficients $b^{ij}_k$ with $\Gamma^{ij}_k$, where {$b^{ij}_k=-g^{is}b_{sk}^j$} and the latter are exactly Christoffel symbols of second kind.  Then,  Tsarev proved the following theorem:
\begin{proposition}[\cite{tsarev85:_poiss_hamil,tsarev91:_hamil}]\label{tsa}
A quasilinear system is Hamiltonian in Dubrovin-Novikov sense if and only if there exists a nondegenerate flat metric $g_{ij}$ such that 
\begin{align}
g_{is}V^{s}_j=g_{js}V^s_i,\\
\nabla_iV^j_k=\nabla_kV^j_i,
\end{align}
where $g^{ij}=(g_{ij})^{-1}$.
\end{proposition}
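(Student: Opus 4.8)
The plan is to reduce the statement, following the paper's narrative, to the local fact that for a fixed nondegenerate \emph{flat} metric $g_{ij}$ the two displayed conditions are equivalent to the existence of a hydrodynamic density $h(u)$ with $V^i_j = g^{is}\nabla_s\nabla_j h$. By the Dubrovin--Novikov characterization recalled just after Theorem~\ref{th:ham_A}, this last statement is exactly what it means for the system to be Hamiltonian in Dubrovin--Novikov sense: the operator is $A^{ij} = g^{ij}\partial_x - g^{is}\Gamma^j_{sk}u^k_x$, which is Hamiltonian precisely because $g$ is flat, and a short computation gives $A^{ij}h_{,j} = (g^{is}\nabla_s\nabla_j h)\,u^j_x$, i.e.\ the right-hand side of the system. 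I would then work in flat (affine) coordinates, where $\Gamma\equiv 0$ and $\nabla$ becomes ordinary partial differentiation; such coordinates exist locally since $g$ is flat, and the statement is local in any case.

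For the forward direction I would invoke Theorem~\ref{thm8}: if the system is Hamiltonian with a homogeneous first-order operator then $g^{is}V^j_s = g^{js}V^i_s$ and $g^{is}(V^j_{s,k}-V^j_{k,s}) + b^{ij}_sV^s_k - b^{sj}_kV^i_s = 0$, while nondegeneracy together with Theorem~\ref{th:ham_A} forces $b^{ij}_k = -g^{is}\Gamma^j_{sk}$. Contracting the first relation with the metric gives the symmetry $g_{is}V^s_j = g_{js}V^s_i$, i.e.\ $V_{ij}:=g_{is}V^s_j$ is symmetric. Lowering the free index in the second relation with $g_{im}$, substituting $b=-g\Gamma$, and using that symmetry to rewrite the remaining term $g_{im}g^{sa}\Gamma^j_{ak}V^i_s$ as $\Gamma^j_{sk}V^s_m$, the relation collapses to $V^j_{m,k}-V^j_{k,m}-\Gamma^j_{ms}V^s_k+\Gamma^j_{sk}V^s_m=0$, which equals $-(\nabla_m V^j_k-\nabla_k V^j_m)$ once the symmetric Christoffel terms $\Gamma^s_{mk}$ cancel; hence $\nabla_i V^j_k = \nabla_k V^j_i$. (Alternatively one can bypass Theorem~\ref{thm8} altogether: in flat coordinates $V_{ij}=\partial_i\partial_j h$ is manifestly symmetric and $\partial_k V^j_i = g^{js}\partial_k\partial_s\partial_i h$ is manifestly symmetric under $i\leftrightarrow k$.)

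For the converse, assume $g$ is flat and nondegenerate, $V_{ij}$ is symmetric, and $\nabla_i V^j_k = \nabla_k V^j_i$. Passing to flat coordinates, lowering an index in the last identity and using symmetry of $V$ shows that the components $\partial_i V_{jk}$ are symmetric in the last two indices and under interchange of the first and third, hence totally symmetric. Total symmetry in the first two indices then lets the Poincar\'e lemma produce, for each $k$, a function $\phi_k$ with $\partial_i\phi_k = V_{ik}$; the covector field $\phi=(\phi_k)$ is closed because $\partial_i\phi_k-\partial_k\phi_i = V_{ik}-V_{ki}=0$, so a second application of the Poincar\'e lemma yields $h$ with $\phi_k=\partial_k h$, whence $V_{jk}=\partial_j\partial_k h=\nabla_j\nabla_k h$ in these coordinates and therefore $V^i_j=g^{is}\nabla_s\nabla_j h$ as a tensor identity. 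By Theorem~\ref{th:ham_A} the operator $g^{ij}\partial_x-g^{is}\Gamma^j_{sk}u^k_x$ is Hamiltonian, and $A^{ij}h_{,j}$ reproduces the system, so it is Hamiltonian in Dubrovin--Novikov sense. The routine part is the index bookkeeping converting Theorem~\ref{thm8} into the covariant conditions; the genuinely substantive step is the converse, namely reconstructing the potential $h$ from the total symmetry of $\nabla V$, and the point to watch is that this is only local and that flatness of $g$ is exactly what makes available both the affine coordinates and the vanishing of the two Poincar\'e obstructions.
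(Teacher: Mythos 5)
Your proposal is correct, and for the direction the paper actually proves it follows the same route: invoke Theorem~\ref{thm8}, substitute $b^{ij}_k=-g^{is}\Gamma^j_{sk}$ (forced by nondegeneracy and Theorem~\ref{th:ham_A}), and check by index manipulation that condition \eqref{eq:103} collapses to $\nabla_iV^j_k=\nabla_kV^j_i$ once the symmetry $g^{is}V^j_s=g^{js}V^i_s$ is used to absorb the stray Christoffel term; your computation of this step is the same as the paper's, modulo sign conventions. Where you genuinely go beyond the paper is the sufficiency direction: the paper's argument only establishes the equivalence of Tsarev's covariant conditions with the \emph{necessary} conditions of Theorem~\ref{thm8}, and leaves the converse (that these conditions actually produce a hydrodynamic density $h$) to the cited works of Tsarev. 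You supply that missing half explicitly: passing to flat coordinates, you show $\partial_iV_{jk}$ is totally symmetric (the two available symmetries generate $S_3$), and then two applications of the Poincar\'e lemma reconstruct $h$ with $V_{ij}=\partial_i\partial_jh$, whence $V^i_j=g^{is}\nabla_s\nabla_jh$ as a tensor identity and the system is Hamiltonian for the Dubrovin--Novikov operator of $g$. This makes the ``if and only if'' self-contained, at the price of being a local statement (existence of affine coordinates and the Poincar\'e lemma), which you correctly flag. No gaps.
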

As natural,  the previous can be considered a Corollary of Theorem \ref{thm8}. 

\vspace{3mm}

\begin{proof}[Proof of \ref{thm8}]
{ Let us assume the operator $A^{ij}=g^{ij}\partial_x+b^{ij}_ku^k_x$ to be Hamiltonian, i.e. conditions in Theorem \ref{th:ham_A} are satisfied.  In order to obtain the necessary conditions, we only need to compute the compatibility conditions $\ell_F(A^i(\textbf{p}))=0$, for $i=1,2,\dots n$.}

Now, 
{\begin{equation}
  \label{eq:23}
  \begin{split}
  \ell_F(A(\mathbf{p})) =& \partial_t(g^{ij}p_{j,x} +
 b^{ij}_k u^k_x p_j)
  \\
  & - V^i_{l,k}u^l_x(g^{kj}p_{j,x} + b^{kj}_h u^h_x p_j)
   - V^i_k\partial_x(g^{kj}p_{j,x} + b^{kj}_h u^h_x p_j)
\end{split}
\end{equation}}
Then, expliciting the identities on the cotangent covering and collecting for $u^j_\sigma$, we obtain the following set of conditions
 \begin{align}
  \label{eq:100}
  & V^i_kg^{kj}-V^j_kg^{ki}= 0,
    \\
    \label{eq:101}
    \begin{split}
      & g^{ij}_kV^k_m+g^{ik}(V^j_{k,m}-V^j_{m,k})+g^{ik}V^j_{k,m}
      +b^{ik}_mV^j_k
      -V^i_{m,k}g^{kj}-V^i_kg^{kj}_m-V^i_kb^{kj}_m=0,
    \end{split}
    \\
    &\label{eq:103}
      g^{ik}\left(V^j_{k,h}-V^j_{h,k}\right)+
      b^{ij}_kV^k_h-b^{kj}_hV^i_k = 0,
    \\
    \label{eq:201}
    \begin{split}
      &g^{ik}\left(V^j_{k,ml}+V^j_{k,lm}-V^j_{m,kl}-V^j_{l,km}\right)
      +b^{ij}_{m,k}V^k_l+b^{ij}_{l,k}V^k_m+b^{ij}_kV^k_{l,m}
      +b^{ij}_kV^k_{m,l}
      \\
      &\hphantom{ciao}
      +b^{ik}_lV^j_{k,m}+b^{ik}_mV^j_{k,l}-b^{ik}_lV^j_{m,k}
      -b^{ik}_mV^j_{l,k}\\
      &\hphantom{ciao}
      -b^{kj}_mV^i_{l,k}-b^{kj}_lV^i_{m,k}-b^{kj}_{m,l}V^i_k
      -b^{kj}_{l,m}V^i_k=0.
    \end{split}
  \end{align}
  
 Let us first notice that condition \eqref{eq:101} is a differential consequence of \eqref{eq:100} and \eqref{eq:103}.  Indeed,  using the Hamiltonianity condition $g^{ij}_{,k}=b^{ij}_k+b^{ji}_k$ of the operator, we exactly obtain that 
 \begin{multline}
g^{ik}(V^j_{k,h}-V^j_{h,k})+b^{ij}_kV^k_h-b^{kj}_hV^i_k
g^{jk}(V^i_{k,h}-V^i_{h,k})\\ +b^{ji}_kV^k_h-b^{ki}_hV^j_k+\partial_h(g^{ik}V^j_k-g^{jk}V^i_k)=0,
\end{multline}
that is trivially equal to zero when \eqref{eq:100} and \eqref{eq:103} are taken into account. Note that this proof is exactly the same as the one of Lemma 4 in \cite{VerVit1}.  On the other side,  we can again prove that \eqref{eq:201} is a differential consequence of \eqref{eq:100} and \eqref{eq:103} but the proof needs some arrangements from \cite{VerVit1} due to the fact that the tensor $g^{ij}$ is not invertible in this case. 

 Let us subtract the differential consequence 
{ \begin{align}
 \left(g^{ik}(V^j_{k,m}-V^j_{m,k})
        +b^{ij}_kV^k_m-b^{kj}_mV^i_k\right)_{,l}+\left(g^{ik}(V^j_{k,l}-V^j_{l,k})
        +b^{ij}_kV^k_l-b^{kj}_lV^i_k\right)_{,m}\end{align}}
        of \eqref{eq:103} from equation
  \eqref{eq:201}, obtaining
\begin{align*}
     &\left(b^{ij}_{m,k}-b^{ij}_{k,m}\right)V^k_l
     +\left(b^{ij}_{l,k}-b^{ij}_{k,l}\right)V^k_m+ b^{kj}_l\left(V^i_{k,m}-V^i_{m,k}\right)+
        b^{ki}_l\left(V^j_{m,k}-V^j_{k,m}\right)
        \\
       & + b^{kj}_m\left(V^i_{k,l}-V^i_{l,k}\right)+
     b^{ki}_m\left(V^j_{l,k}-V^i_{k,l}\right)
\end{align*}
Now,  let us multiply for $g^{am}g^{bl}$, obtaining
\begin{align}\begin{split}
&g^{am}\left(b^{ij}_{m,k}-b^{ij}_{k,m}\right)g^{bl}V^k_l
     +g^{bl}\left(b^{ij}_{l,k}-b^{ij}_{k,l}\right)g^{am}V^k_m+ g^{bl}b^{kj}_lg^{am}\left(V^i_{k,m}-V^i_{m,k}\right)
        \\
       & +
     g^{bl}  b^{ki}_l g^{am}\left(V^j_{m,k}-V^j_{k,m}\right)+ g^{am}b^{kj}_mg^{bl}\left(V^i_{k,l}-V^i_{l,k}\right)+
     g^{am}b^{ki}_mg^{bl}\left(V^j_{l,k}-V^i_{k,l}\right)\end{split}
\end{align}
Let us  use condition \eqref{eq:101} for the last four terms.  Then, by using \eqref{eq:100}, some terms erase and we obtain the following
\begin{align}\begin{split}\label{quant}
&g^{bl}V^k_l\left(g^{am}\left(b^{ij}_{m,k}-b^{ij}_{k,m}\right)+b^{ai}_mb^{mj}_k-b^{aj}_mb^{mi}_k\right)\\
&\hphantom{ciao}+g^{al}V^k_l\left(g^{bm}\left(b^{ij}_{m,k}-b^{ij}_{k,m}\right)+b^{bi}_mb^{mj}_k-b^{bj}_mb^{mi}_k\right)
\end{split}
\end{align}
  We finally obtain that, by condition \eqref{curv},  \eqref{quant} is zero and the Theorem is proved.
\end{proof}

\vspace{3mm}

As a corollary, in the non-degeneracy assumption on $g^{ij}$, we prove Corollary \ref{tsa}:

\vspace{3mm}
\begin{proof}[Proof of \ref{tsa}]
The first condition is a direct consequence of the first one in Theorem \ref{thm8}, by lowering the indices.  For the second condition, it is sufficient to observe that 
\begin{equation}
g^{is}\left(\nabla_sV^j_k-\nabla_kV^j_s\right)=g^{is}\left(V^j_{k,s}-V^j_{s,k}\right)+\Gamma^{ij}_sV^s_k+g^{is}V_s^l\Gamma^j_{kl}
\end{equation}
and by using $g^{is}V_s^l=g^{ls}V_s^i$ we obtain 
\begin{equation}
g^{is}\left(\nabla_sV^j_k-\nabla_kV^j_s\right)=g^{is}\left(V^j_{k,s}-V^j_{s,k}\right)+\Gamma^{ij}_sV^s_k-\Gamma^{sj}_kV^i_s.
\end{equation}
Due to the non-degeneracy of $g^{ij}$, we obtain that condition 2 in Theorem \ref{thm8} is satisfied if and only if Tsarev's conditions are satisfied.
\end{proof}

\vspace{5mm}

\subsubsection*{C.  First order non-homogeneous systems}

Finally,  we extend the previous procedures to non-homogeneous hydrodynamic-type systems and non-homogeneous operators of type $1+0$. 
It follows that if a non-homogeneous hydrodynamic-type system is Hamiltonian the following holds
\begin{equation}
u^i_t={V^i_ju^j_x+W^i=(u^j_x\nabla^i\nabla_j+\tilde{\nabla}^i)h}\qquad i=1,2,\dots , n.
\end{equation}
{Here we denote with $\nabla_i$ the covariant derivative with respect to the metric tensor $g$, indicating the related Christoffel symbols with $\Gamma^i_{jk}$, whereas we use $\sim$ to indicate the covariant derivative with respect to the symplectic form $\omega$, choosing to indicate the related symbols with $\tilde{\Gamma}^i_{jk}$ and the derivative with $\tilde{\nabla}$.}

Let us now consider a non-homogeneous operator of type 1+0 satisfying Theorem \ref{thm1}:
\begin{equation}
C^{ij}=A^{ij}+\omega^{ij}=g^{ij}\partial_x+b^{ij}_ku^k_x+\omega^{ij},
\end{equation}
that can be identified by the linear vector function
$
C^{ij}(\textbf{p})=g^{ij}p_{j,x}+b^{ij}_ku^kp_j+\omega^{ij}p_j.
$
The following theorem holds.
{\begin{theorem}\label{thmcomp} If a quasilinear system
\begin{equation}\label{sys}
u^i_t=V^i_j(u)u^j_x+W^i(u), \qquad i=1,2,\dots n
\end{equation}is Hamiltonian with the Hamiltonian operator $C^{ij}$, then
the following conditions are satisfied:
\begin{align}
&V^i_sg^{sj}-V^j_sg^{si}= 0;\\
&g^{is}\left(V^j_{s,k}-V^j_{k,s}\right)+
      b^{ij}_sV^s_k-b^{sj}_kV^i_s = 0;\\
&{g^{ij}_{,s}W^s-g^{js}W^i_{,s}-g^{is}W^j_{,s}-\omega^{si}V^j_s-\omega^{sj}V^i_s=0;}\\
&W^i_{,s}\omega^{sj}+W^j_{,s}\omega^{is}+\omega^{ji}_{,s}W^s=0;\\
&T^{ij}_k=0 \end{align}
where
\begin{align*}T^{ij}_k&=-g^{il}W^j_{,lk}+b^{ij}_{k,l}W^l+b^{ij}_lW^l_{,k}-b^{il}_kW^j_{,l}-b^{lj}_kW^i_{,l}-V^i_{k,s}\omega^{sj}\\
&\hphantom{ciaociao}+\omega^{is}\left(V^j_{s,k}-V^j_{k,s}\right)+\omega^{ij}_{,s}V^s_k-\omega^{sj}_{,k}V^i_s\end{align*}
\end{theorem}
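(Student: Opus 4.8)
The plan is to apply the Kersten--Krasilshchik--Verbovetsky criterion recalled in Section~\ref{sec2}. Since \eqref{sys} is assumed to be Hamiltonian with the operator $C^{ij}=g^{ij}\partial_x+b^{ij}_ku^k_x+\omega^{ij}$, the associated linear vector function
\[
C^i(\mathbf{p})=g^{ij}p_{j,x}+b^{ij}_ku^k_xp_j+\omega^{ij}p_j
\]
must satisfy $\ell_F(C(\mathbf{p}))^i=0$ on the cotangent covering \eqref{a22}. Hence the proof reduces to expanding this expression and reading off the coefficients of the independent jet monomials, exactly as in the proofs of Proposition~\ref{lem1} and Theorem~\ref{thm8}.

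Concretely, I would write
\[
\ell_F(C(\mathbf{p}))^i=D_t\bigl(C^i(\mathbf{p})\bigr)-\bigl(V^i_{l,k}u^l_x+W^i_{,k}\bigr)C^k(\mathbf{p})-V^i_kD_x\bigl(C^k(\mathbf{p})\bigr),
\]
eliminate $u^l_t$ and $p_{l,t}$ by means of the two equations of \eqref{a22}, carry out the total derivatives, and expand. One is then left with a polynomial in $p_{j,xx},p_{j,x},p_j$ whose coefficients are functions of $u$ multiplying $u^k_{xx}$, $u^k_xu^l_x$, $u^k_x$ or $1$. Equating each coefficient to zero: the coefficient of $p_{j,xx}$ gives the first condition; the coefficients carrying a factor $u^k_{xx}$ or $u^k_xu^l_x$, together with the Hamiltonianity relation $g^{ij}_{,k}=b^{ij}_k+b^{ji}_k$, reproduce the homogeneous conditions already appearing in Theorem~\ref{thm8}, and in particular the second listed condition; the coefficient of $p_{j,x}$ with no $u$-derivatives gives the third condition; the coefficient of $p_j$ with no $u$-derivatives gives the fourth (the non-homogeneous analogue of \eqref{8901}); and the coefficient of $u^k_xp_j$ is, after the cancellations produced by the $W$-dependent terms of \eqref{a22}, precisely the tensor $T^{ij}_k$. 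Since $\ell_F(C(\mathbf{p}))=0$ forces every such coefficient to vanish, all five listed relations hold, which proves the statement.

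As in the proof of Theorem~\ref{thm8}, the relations obtained from the remaining monomials — the non-homogeneous analogues of \eqref{eq:101} and \eqref{eq:201} — are differential consequences of the five listed ones together with the Hamiltonianity of $C^{ij}$, i.e. of Theorems~\ref{th:ham_A}, \ref{th:ham_omega} and \ref{thm1}; one may record this observation for completeness, but it is not needed for the necessity statement.

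I expect the only genuine obstacle to be the bookkeeping. In the expansion of $D_t\bigl(b^{ij}_ku^k_xp_j\bigr)$ and $D_t\bigl(\omega^{ij}p_j\bigr)$ one must track carefully which terms land in the coefficient of $u^k_xp_j$ — notably the pieces $b^{ij}_{k,l}W^l$, $b^{ij}_lW^l_{,k}$, $-b^{il}_kW^j_{,l}$, $-b^{lj}_kW^i_{,l}$ and $-g^{il}W^j_{,lk}$ coming from the $W$-terms of \eqref{a22}, alongside the $\omega$--$V$ contributions $-V^i_{k,s}\omega^{sj}$, $\omega^{is}(V^j_{s,k}-V^j_{k,s})$, $\omega^{ij}_{,s}V^s_k$ and $-\omega^{sj}_{,k}V^i_s$ — so as to recognise that they assemble exactly into $T^{ij}_k$. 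Unlike the non-degenerate Ferapontov--Mokhov setting \cite{FerMok1}, $g^{ij}$ is not assumed invertible, so no simplification by raising indices is available; for a necessity statement this causes no difficulty, since one only reads off coefficients rather than solving for a connection.
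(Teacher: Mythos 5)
Your proposal is correct and follows essentially the same route as the paper: both reduce the statement to the Kersten--Krasil'shchik--Verbovetsky condition $\ell_F(C(\mathbf{p}))=0$ on the cotangent covering, with the paper merely organising the computation as $\ell_F(C)=\ell_F(A)+\ell_F(\omega^{ij}p_j)$ and reusing the expansion of $\ell_F(A)$ from Theorem~\ref{thm8}, before collecting coefficients of the jet monomials exactly as you describe. Your identification of which monomial yields which of the five conditions, and your remark that the remaining coefficients are differential consequences modulo the Hamiltonianity of $C^{ij}$, match the paper's argument.
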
}

Whereas,  under the non-degeneracy hypothesis on $g^{ij}$:

{\begin{proposition}\label{cor2}
If the quasilinear system
\begin{equation}\label{sys}
u^i_t=V^i_j(u)u^j_x+W^i(u), \qquad i=1,2,\dots n
\end{equation}is Hamiltonian with the non-degenerate Hamiltonian operator $C^{ij}$, then
the following conditions are satisfied:
\begin{align}
&\nabla^iV^j_k=\nabla^jV^i_k;\\
&g^{ik}V^j_k=g^{jk}V^i_k;\\
&\nabla^iW^j+\nabla^jW^i=\omega^{ik}V^j_k+\omega^{jk}V^i_k;\\
&\tilde{\nabla}^iW^j=\tilde{\nabla}^jW^i;\label{527}\\
&T^{ij}_k=0.
\end{align}
\end{proposition}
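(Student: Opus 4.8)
The plan is to obtain Proposition \ref{cor2} as a corollary of Theorem \ref{thmcomp}: if the system is Hamiltonian with the non-degenerate operator $C^{ij}$, then all five conditions of Theorem \ref{thmcomp} already hold, and it only remains to rewrite each of them in tensorial form. The essential input is the Ferapontov--Mokhov characterisation \cite{FerMok1} recalled above: since $\det g^{ij}\neq 0$, the metric $g_{ij}=(g^{ij})^{-1}$ is flat, the coefficients are $b^{ij}_k=-g^{is}\Gamma^j_{sk}$ with $\Gamma^j_{sk}$ the Christoffel symbols of $g$ (so $\nabla g=0$), $\omega^{ij}$ is a finite-dimensional Poisson structure -- here a genuine symplectic form -- with associated connection $\tilde\nabla$, and the operator Hamiltonianity relation $g^{ij}_{,k}=b^{ij}_k+b^{ji}_k$ is available. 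None of this requires recomputing the compatibility conditions on the cotangent covering; everything below is a substitution.

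First I would deal with conditions 1 and 2, following verbatim the argument in the proof of Proposition \ref{tsa}. The first condition of Theorem \ref{thmcomp} is literally condition 2 of the Proposition. For the second, expanding the covariant derivatives and using $b^{ij}_k=-g^{is}\Gamma^j_{sk}$ together with the symmetry $g^{is}V^l_s=g^{ls}V^i_s$, one checks that $g^{is}(\nabla_sV^j_k-\nabla_kV^j_s)$ equals, up to sign, the left-hand side of the second condition of Theorem \ref{thmcomp}; hence, by non-degeneracy of $g$, Tsarev's condition $\nabla_sV^j_k=\nabla_kV^j_s$ holds, and raising the free index through $\nabla$ (metric compatibility) and using condition 2 once more gives $\nabla^iV^j_k=\nabla^jV^i_k$, which is condition 1.

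Next, for condition 3 I would expand $\nabla^iW^j=g^{is}W^j_{,s}-b^{ij}_sW^s$, symmetrise in $i,j$ and use $g^{ij}_{,k}=b^{ij}_k+b^{ji}_k$ to identify the first three terms of the third condition of Theorem \ref{thmcomp} with $-(\nabla^iW^j+\nabla^jW^i)$; the remaining two terms become $\omega^{ik}V^j_k+\omega^{jk}V^i_k$ by skew-symmetry of $\omega$, giving condition 3 of the Proposition. Condition 4 of Theorem \ref{thmcomp} is, up to relabelling and the skew-symmetry of $\omega$, exactly the compatibility condition \eqref{8901} from the Remark following Proposition \ref{lem1}; contracting it with $\omega^{ia}\omega^{jb}$ and using the metric compatibility of $\tilde\nabla$ (with symbols $\tilde\Gamma^j_{ik}=-\tfrac12\omega_{ia}\omega^{aj}_{,k}$) reduces it, exactly as in that Remark, to $\tilde\nabla^iW^j=\tilde\nabla^jW^i$, i.e. \eqref{527} -- this step uses the non-degeneracy of $\omega$, implicit in treating it as a symplectic form. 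Finally, the fifth condition $T^{ij}_k=0$ is identical to that of Theorem \ref{thmcomp} and is simply carried over.

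I expect the only real obstacle to be bookkeeping: keeping the index placement in $b^{ij}_k=-g^{is}\Gamma^j_{sk}$ and all the signs consistent through the substitutions, and recognising that, unlike conditions 1--4, the tensor $T^{ij}_k$ admits no obvious covariant simplification and must be left as stated. A further remark worth making, in analogy with Tsarev's and Ferapontov--Mokhov's results, is that these five conditions are presumably not independent once $C^{ij}$ is assumed Hamiltonian; identifying the redundancies (as \eqref{condd2} follows from \eqref{condd1}, or \eqref{eq:101} and \eqref{eq:201} from \eqref{eq:100} and \eqref{eq:103}) would be the natural follow-up, though it is not needed for the statement as given.
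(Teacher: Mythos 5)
Your proposal is correct and follows essentially the same route as the paper: Proposition \ref{cor2} is obtained from Theorem \ref{thmcomp} by rewriting its conditions covariantly under the non-degeneracy assumption, using $b^{ij}_k=-g^{is}\Gamma^j_{sk}$ for conditions 1--3 (as in the proof of Proposition \ref{tsa}) and the symplectic connection $\tilde{\Gamma}^j_{ik}=-\tfrac12\omega_{ia}\omega^{aj}_{,k}$ for \eqref{527}. In fact your write-up is more complete than the paper's, which only spells out the translation of the fourth condition and leaves the first three implicit.
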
}
\begin{proof}[Proof of \ref{thmcomp}] {Let us assume that $C^{ij}=g^{ij}\partial_x+b^{ij}_ku^k_x+\omega^{ij}$ is Hamiltonian (see conditions in Theorem \ref{thm1}).  We only need to prove the compatibility conditions $\ell_F(C^i(\textbf{p}))=0$ for $i=1,2,\dots n$. }

Now,  let us observe that 
\begin{equation}
\ell_F(C)=\ell_F(A)+\ell_F(\omega^{ij}p_j)
\end{equation}
Then 
{\begin{align}\begin{split}
\ell_F(\omega^{ij}p_j)&=\partial_t(\omega^{ij}p_j)-\left(V^i_{j,l}u^j_x+W^i_l\right)\omega^{lk}p_k-V^i_j\partial_x(\omega^{jk}p_k)\\
&=\omega^{ij}_{,k}u^k_tp_j+\omega^{ij}p_{j,t}-V^i_{j,l}\omega^{lk}u^j_xp_k-W^i_{,j}\omega^{lk}p_k+\\
&\hphantom{cioaciaociaociao}-V^i_j\omega^{jk}_{,l}u^l_xp_k-V^i_j\omega^{jk}p_{k,x}\\
&=\omega^{ij}_{,k}\left(V^k_lu^l_x+W^k\right)p_j+\omega^{ij}\left(V^k_{j,l}u^l_x-V^k_{l,j}u^l_x-W^k_{,j}\right)p_k\\
&\hphantom{==ciao}+\omega^{ij}V^k_jp_{k,x}-V^i_{j,l}\omega^{lk}u^j_xp_k-W^i_{,l}\omega^{lk}p_k+\\
&\hphantom{cioaciaociaociao}-V^i_j\omega^{jk}_{,l}u^l_xp_k-V^i_j\omega^{jk}p_{k,x}\end{split}
\end{align}}
Summing this with the expression of $\ell_F(A)$ (see \cite{Ver1}) and collecting for each variable $u^j_\sigma$, the {theorem} is proved.
\end{proof}

\vspace{3mm}
\begin{proof}[Proof of \ref{cor2}]
Then,  with non-degenerate coefficients $g^{ij}$ and $\omega^{ij}$ we  introduce the following symbols:
\begin{equation}
\tilde{\Gamma}^j_{ik}=-\frac{1}{2}\omega_{ia}\omega^{aj}_{,k}
\end{equation}

this is a symplectic connection which is compatible with the symplectic form $\omega^{ij}$ (see also \cite{mokhov98:_sympl_poiss}). Define $\tilde{\Gamma}^{ij}_k=\omega^{is}\tilde{\Gamma}^j_{sk}$, then condition \eqref{527} in the previous theorem can be written in a coordinate-free form as follows:
\begin{equation}
\tilde{\nabla}^jW^i-\tilde{\nabla}^iW^j=0
\end{equation}
where $\tilde{\nabla}$ is the covariant derivative for the connection $\tilde{\Gamma}^{i}_{jk}$.

\end{proof}

\section{Examples}

This section aims at presenting some examples of quasilinear systems satisfying the compatibility conditions found in Section \ref{sec2}.  The examples treated herein can be computed via Computer Algebra Systems  as Reduce (see \cite{KVV17,Vit1}) or Maple.

Let us firstly study the case of KdV equation, in two versions. Recall that in both cases we applied the method of inversion of the system in order to increase the number of variables and reduce the order (see \cite{tsa3,DellVer1})

\begin{example}[KdV equation - I]
Let us consider the KdV equation:
\begin{equation}\label{kdv}
u_t=6uu_x-u_{xxx}
\end{equation}
and let us regard \eqref{kdv} as an evolutionary system:
\begin{equation}\label{kdvsys}
\begin{cases}
u^1_t=u^2\\u^2_t=u^3\\u^3_t=-u^1_x+6u^1u^2
\end{cases}
\end{equation}
Then it is a non-homogeneous hydrodynamic type system. Let us consider the non-homogeneous operator
\begin{equation}\label{kdvop}
C^{ij}=g^{ij}\partial_x+\Gamma^{ij}_ku^k_x+\gamma^{ij}
\end{equation}
where $$g^{ij}=\begin{pmatrix}
0&0&1\\0&-1&0\\1&0&8u^1
\end{pmatrix}$$
and $$\gamma^{ij}=\begin{pmatrix}
0&2u^1&2u^2\\-2u^1&0&-12(u^1)^2+2u^3\\-2u^2&12(u^1)^2-2u^3&0
\end{pmatrix}$$
The compatibility conditions described in Corollary \ref{cor2} are satisfied.

\end{example}
\begin{example}[KdV equation - II]\label{ex0}
Let us consider again system \eqref{kdvsys}.  
This system is also Hamiltonian with the following non-homogeneous hydrodynamic type operator \cite{tsa3}
\begin{equation}
    C^{ij}=\begin{pmatrix}
    0&0&0\\0&0&0\\0&0&-1
    \end{pmatrix}\partial_x+\begin{pmatrix}
    0&-1&0\\1&0&6u^1\\0&-6u^1&0
    \end{pmatrix} \,,
    \end{equation}
   with the leading coefficient $g^{ij}$ being degenerate.    The operator and the system are compatible in the sense of Theorem \ref{thmcomp}.
\end{example}
Note that in Example \ref{ex0}, the Hamiltonian operator is degenerate.  Other examples in 2 components of non-homogeneous operators with degenerate leading coefficient are given in the following:

\begin{example}[$2$-wave interaction system]
In \cite{mokhov98:_sympl_poiss}, the author introduced the real reduction of 2-waves interaction system formulated in terms of the system of hydrodynamic equations: 
\begin{equation}\label{mok1}
\begin{cases}
u_t=a uv \\
v_t=a v_x+u^2
\end{cases}\,,
\end{equation}
with $a$ constant. The system admits a Hamiltonian formulation, with the operator
\begin{equation}
C^{i j}=\begin{pmatrix}
0&0\\0&1
\end{pmatrix}\partial_x\,+\,\begin{pmatrix}
0&-u\\u&0
\end{pmatrix} \,.
\label{eq:ham_op_2waves}
\end{equation} 
As expected, the system is compatible in the sense of Theorem \ref{thmcomp} with the operator $C^{ij}$. 
\end{example}

\begin{example}[Sinh-Gordon equation]

{Let us consider the Sinh-Gordon equation
\begin{equation}
    \varphi_{\tau\xi}=\sinh{\varphi} \,.
\end{equation}
Applying the change of variables given by $\varphi=2\log u$ and  $v={2u_\tau}/{u}$,  consider the light-cone coordinates $\tau=t, \xi=t-x$ and obtain
\begin{equation}
    \begin{cases}u_t=\dfrac{1}{2}uv\\
    v_t=v_x+\dfrac{1}{2}\left(u^2-\dfrac{1}{u^2}\right)
    \end{cases}.
\end{equation}
In \cite{DellVer1}, the system is shown to be Hamiltonian with the non-homogeneous hydrodynamic operator
\begin{equation}
    C^{i j}=\begin{pmatrix}
0&0\\0&1
\end{pmatrix}\partial_x\,+\frac{1}{2}\begin{pmatrix}
0&u\\-u&0
\end{pmatrix} .
\end{equation}
The operator and the system are compatible in the sense of Theorem \ref{thmcomp}.}
\end{example}

Finally, we conclude this section with the first example presented in the Introduction.

\begin{example}[The 3-waves equation]
Let us consider the 3-waves equations:
\begin{equation}\begin{cases}\label{3wav}
u^1_t=-c_1u^1_x-2(c_2-c_3)u^2u^3\\
u^2_t=-c_2u^2_x-2(c_1-c_3)u^1u^3\\
u^3_t=-c_3u^3_x-2(c_2-c_1)u^1u^2\end{cases}
\end{equation}
and let us consider the non-homogeneous local  operator
\begin{equation}
M^{ij}=\begin{pmatrix}
1&0&0\\0&-1&0\\0&0&-1
\end{pmatrix}\partial_x+\begin{pmatrix}
0&-2u^3&2u^2\\
2u^3&0&2u^1\\-2u^2&-2u^1&0
\end{pmatrix}.
\end{equation}
The compatibility conditions presented in Corollary \ref{cor2} are satisfied. 
\end{example}

An analogue of this discussion, but in terms of nonlocal operators for non-homogeneous quasilinear systems, was also presented in \cite{Ver1}.  In the paper,  the author showed compatibility conditions for homogeneous first order operators similar to Tsarev's ones and extended to non-homogeneous systems.  Other examples are also therein presented.

\section{Conclusions}
In this paper,  tensorial  necessary conditions for a given quasilinear system of first order PDEs to admit a Hamiltonian structure are discussed.  We emphazise that such conditions were firstly presented for non-degenerate homogeneous Hamiltonian operators of order 1 and homogeneous systems (see \cite{tsarev85:_poiss_hamil}).  

The main novelties of the present work are the following:
\begin{itemize}
\item degenerate leading coefficients of the operators are discussed,  
\item conditions previously computed are extended to non-homogeneous operators of type $1+0$, and 
\item non-homogeneous systems are investigated.  
\end{itemize}
Some example of equations of the previous kind are presented (such as the well-known Korteweg-de Vries treated as system,  the 3-waves equation,  and the 2-wave interaction system),  showing that a further investigation of such conditions for different operators and systems is needed in the future.

\vspace{3mm}

\textbf{Acknowledgements.} The author thanks R. Vitolo,  M. Menale and M.  Dell'Atti for stimulating discussions. The author also acknowledges the financial support of GNFM of the Istituto Nazionale di Alta Matematica, of PRIN 2017 \textquotedblleft Multiscale phenomena in Continuum
Mechanics: singular limits, off-equilibrium and transitions\textquotedblright,
project number 2017YBKNCE and of the research project Mathematical Methods in Non-Linear Physics
(MMNLP) by the Commissione Scientifica Nazionale -- Gruppo 4 -- Fisica Teorica
of the Istituto Nazionale di Fisica Nucleare (INFN).

\vspace{5mm}

\textbf{Conflict of interests.}
On behalf of all authors, the corresponding author states that there is no conflict of interest.

\vspace{5mm}


\begin{thebibliography}{10}

{\bibitem{DellVer1} M. Dell'Atti, P. Vergallo: \emph{Classification of degenerate non-homogeneous Hamiltonian operators}, J.Math.Phys., Vol.64, Issue 3 DOI: \url{https://doi.org/10.1063/5.0135134}, March 2023;
arXiv: \url{https://arxiv.org/abs/2210.14289},  2022.}


\bibitem{DubKri} B. A. Dubrovin, I. M. Krichever, and S. P. Novikov.
\newblock  Integrable systems. I. In \emph{Dynamical Systems IV}, volume 4 of Encyclopaedia of Mathematical Sciences, 
\newblock pages 173–280. Springer-Verlag, Berlin, 2 edition, 2001.

\bibitem{DN83}
B.A. Dubrovin and S.P. Novikov.
\newblock \emph{Hamiltonian formalism of one-dimensional systems of hydrodynamic type
  and the Bogolyubov--Whitham averaging method}.
\newblock {\em Soviet Math. Dokl.}, 27(3):665--669, 1983.

\bibitem{DubrovinNovikov:PBHT}
B.~A. Dubrovin and S.~P. Novikov.
\newblock \emph{Poisson brackets of hydrodynamic type}.
\newblock {\em Soviet Math. Dokl.}, 30:651--654, 1984.



\bibitem{falq}
G. Falqui, 
\newblock \emph{On a Camassa-Holm type equation with two dependent variables}, 
\newblock J. Phys. A.: Math. Gen. 39(2006),327-342.

\bibitem{FPV14}
E.V. Ferapontov, M.V. Pavlov, and R.F. Vitolo.
\newblock \emph{Projective-geometric aspects of homogeneous third-order {H}amiltonian
  operators}.
\newblock {\em J. Geom. Phys.}, 85:16--28, 2014.
\newblock \texttt{DOI:10.1016/j.geomphys.2014.05.027}.

\bibitem{FerPavVit1} E. V. Ferapontov, M. V. Pavlov and R. Vitolo, 
\newblock \emph{Systems of conservation laws with third-order Hamiltonian structures}, 
\newblock Lett. Math. Phys. 108, Issue 6 (2018), 1525-1550.


\bibitem{KerstenKrasilshchikVerbovetsky:HOpC}
P.~Kersten, I.~Krasil'shchik, and A.~Verbovetsky.
\newblock \emph{Hamiltonian operators and $\ell^*$-coverings}.
\newblock {\em J. Geom. Phys.}, 50:273--302, 2004.

\bibitem{KVV17}
J.~Krasil'shchik, A.~Verbovetsky, and R.~Vitolo.
\newblock {\em The symbolic computation of integrability structures for partial
  differential equations}.
\newblock Texts and Monographs in Symbolic Computation. Springer, 2018.
\newblock ISBN 978-3-319-71654-1; see \url{http://gdeq.org/Symbolic_Book} for
  downloading program files that are discussed in the book.

\bibitem{FerMok1}
O. I. Mokhov, E. V. Ferapontov, 
\newblock \emph{Hamiltonian Pairs Associated with Skew-Symmetric Killing Tensors on Spaces of Constant Curvature}.
\newblock Funktsional. Anal. i Prilozhen., 28:2 (1994), 60–63; Funct. 
Anal. Appl., 28:2 (1994), 123–125.

\bibitem{mokhov98:_sympl_poiss}
O.I. Mokhov.
\newblock \emph{Symplectic and {P}oisson geometry on loop spaces of smooth manifolds
  and integrable equations}.
\newblock In S.P. Novikov and I.M. Krichever, editors, {\em Reviews in
  mathematics and mathematical physics}, volume~11, pages 1--128. Harwood
  academic publishers, 1998.



\bibitem{NovikovManakovPitaevskiiZakharov:TS}
S.~P. Novikov, S.~V. Manakov, L.~P. Pitaevskii, and V.~E. Zakharov.
\newblock {\em Theory of Solitons}.
\newblock Plenum Press, 1984.

%\bibitem{olver88:_darboux_hamil}
%P.~Olver.
%\newblock Darboux' theorem for hamiltonian differential operators.
%\newblock {\em J. Differential Equations}, 71:10--33, 1988.

\bibitem{Olver:ApLGDEq}
P.J. Olver.
\newblock {\em Applications of {L}ie Groups to Differential Equations}.
\newblock Springer-Verlag, 2nd edition, 1993.


\bibitem{tsarev85:_poiss_hamil}
S.P.~Tsarev.
\newblock \emph{On {P}oisson brackets and one-dimensional {H}amiltonian systems of
  hydrodynamic type}.
\newblock {\em Soviet Math. Dokl.}, 31(3):488--491, 1985.

\bibitem{tsarev91:_hamil}
S.P. Tsarev.
\newblock \emph{The geometry of {H}amiltonian systems of hydrodynamic type. the
  generalized hodograph method}.
\newblock {\em Math. USSR-Izv.}, 37(2):397--419, 1991.

\bibitem{tsa3} S.P. Tsarev: \emph{The hamiltonian property of stationary and inverse equations of condensed matter mechanics and mathematical physics}. Math. Notes 46, 569–573 (1989)

\bibitem{Ver1} P.  Vergallo:  \emph{Quasilinear systems of first order PDEs and nonlocal Hamiltonian structures}, Math Phys Anal Geom 25, 26, 2022.  DOI: \url{https://doi.org/10.1007/s11040-022-09438-1};

\bibitem{VerFer1} P. Vergallo and E.V. Ferapontov: \emph{Hamiltonian systems of Jordan block type}, arXiv: \url{https://arxiv.org/abs/2212.01413},  2022; 

\bibitem{VerVit1}
P. Vergallo and R. Vitolo,
\newblock \emph{Homogeneous Hamiltonian operators and the theory of coverings}
\newblock {\em Diff. Geom. Its Appl.}, 75,  101713, April 2021, arXiv: \url{https://arxiv.org/abs/2007.15294}.
 

\bibitem{VerVit2}
P. Vergallo and R. Vitolo,
\newblock \emph{Projective geometry of homogeneous second order Hamiltonian operators},
\newblock {\em arXiv}, \url{https://arxiv.org/abs/2203.04237}, March 2022.

%\bibitem{vinogradov84:_categ}
%A.M. Vinogradov.
%\newblock Category of partial differential equations.
%\newblock {\em Lecture {N}otes in {M}ath.}, 1108:77--102, 1984.
%\newblock Springer-Verlag, Berlin.

\bibitem{Vit1}
R. Vitolo.
\newblock \emph{Computing with Hamiltonian operators}, 
\newblock Computer Physics 
Communications Volume 244 (2019), 228-245.
\end{thebibliography}
\end{document}